\newcolumntype{H}{>{\setbox0=\hbox\bgroup}c<{\egroup}@{}}
\newcommand{\MCA}{\mathcal{A}}
\newcommand{\MCO}{\mathcal{O}}
\newcommand{\MCD}{\mathcal{D}}
\newcommand{\MCL}{\mathcal{L}}
\newcommand{\EE}{\mathbb{E}}
\newcommand{\RR}{\mathbb{R}}
\newcommand{\Ltx}{\mathcal{L}_{t,x}}
\newcommand{\Vy}{V^{\eps}}
\newcommand{\rz}{R}
\newcommand{\vz}{v^{(0)}}
\newcommand{\vo}{v^{(1)}}
\newcommand{\vt}{v^{(2)}}
\newcommand{\pz}{{\pi^{(0)}}}
\newcommand{\vzl}{v^{\pz,(0)}}
\newcommand{\vol}{v^{\pz,(1)}}
\newcommand{\vtl}{v^{\pz,(2)}}
\newcommand{\vthl}{v^{\pz,(3)}}
\newcommand{\Vyl}{V^{\pz,\eps}}
\newcommand{\pzt}{\widetilde\pi^0}
\newcommand{\pot}{\widetilde\pi^1}
\newcommand{\Vyt}{\widetilde{V}^{\eps}}
\newcommand{\vzt}{\widetilde{v}^{(0)}}
\newcommand{\vot}{\widetilde{v}^{(1)}}
\newcommand{\vtt}{\widetilde{v}^{(2)}}
\newcommand{\vtht}{\widetilde{v}^{(3)}}
\newcommand{\vat}{\widetilde{v}^{1\alpha}}
\newcommand{\vtat}{\widetilde{v}^{2\alpha}}
\newcommand{\avel}{\average{\lambda^2}}
\newcommand{\aves}{\overline\lambda}
\newcommand{\eps}{\epsilon}
\newcommand{\half}{\frac{1}{2}}
\newcommand{\abs}[1]{\left|#1\right|}
\newcommand{\average}[1]{\left\langle#1\right\rangle}
\newcommand{\ud}{\,\mathrm{d}}
\newtheorem{theo}{Theorem}[section]
\newtheorem{lem}[theo]{Lemma}
\newtheorem{rem}[theo]{Remark}
\newtheorem{prop}[theo]{Proposition}
\newtheorem{assump}[theo]{Assumption}
\newtheorem{cor}[theo]{Corollary}
\begin{document}

\title{\LARGE \bf Asymptotic Optimal Portfolio in Fast Mean-reverting Stochastic Environments}
\author{ Ruimeng Hu\thanks{Ruimeng Hu is with the Department of Statistics,
	Columbia University,
	New York, NY 10027, USA {\tt\small rh2937@columbida.edu}. The work was mainly done when RH was a graduate student at the University of California, Santa Barbara, and was partially supported by the National Science Foundation
under DMS-1409434.}
}
\maketitle
\thispagestyle{empty}
\pagestyle{empty}

\begin{abstract}
	
This paper studies the portfolio optimization problem when the investor's utility is general and the return and volatility of the risky asset are fast mean-reverting, which are important to capture the fast-time scale in the modeling of stock price volatility. Motivated by the heuristic derivation in [J.-P. Fouque, R. Sircar and T. Zariphopoulou, \emph{Mathematical Finance}, 2016], we propose a zeroth order strategy, and show its asymptotic optimality within a specific (smaller) family of admissible strategies under proper assumptions. This optimality result is achieved by establishing a first order approximation of the problem value associated to this proposed strategy using singular perturbation method, and estimating the risk-tolerance functions. The results are natural extensions of our previous work on portfolio
optimization in a slowly varying stochastic environment [J.-P. Fouque and R. Hu, \emph{SIAM Journal
on Control and Optimization}, 2017], and together they form a whole picture of analyzing portfolio optimization in both fast and slow environments.

\end{abstract}

\begin{keywords}
	Stochastic optimal control, asset allocation, stochastic volatility, singular perturbation, asymptotic optimality.
\end{keywords}

\section{Introduction}\label{sec_intro}
The portfolio optimization problem in continuous time, also known as the Merton problem, was firstly studied in \cite{Me:69,Me:71}. In his original work, explicit solutions on how to allocate money between risky and risk-less assets and/or how to consume wealth are provided so that the investor's expected utility is maximized, when the risky assets follows the Black--Scholes (BS) model and the utility is of Constant Relative Risk Aversion (CRRA) type. Since these seminal works, lots of research has been done to relax the original model assumptions, for example, to allow transaction cost \cite{MaCo:76}, \cite{GuMu:13}, drawdown constraints \cite{GrZh:93},  \cite{CvKa:95},  \cite{ElTo:08},  price impact \cite{CuCv:98}, and stochastic volatility \cite{Za:99}, \cite{ChVi:05}, \cite{FoSiZa:13} and \cite{LoSi:16}.

Our work extends Merton's model by allowing more general utility, and by modeling the return and volatility of the risky asset $S_t$ by a fast mean-reverting process $Y_t$:
\begin{align}
&\ud S_t = \mu(Y_t)S_t\ud t + \sigma(Y_t)S_t\ud W_t,\label{eq_Stofy}\\
&\ud Y_t = \frac{1}{\eps} b(Y_t)\ud t + \frac{1}{\sqrt\eps} a(Y_t) \ud W_t^Y\label{eq_Yt}.
\end{align}
The two standard Brownian motion (Bm) are imperfectly correlated: $\ud \average{W, W^Y} = \rho \ud t, \; \rho \in (-1,1)$. We are interested in the terminal utility maximization problem
\begin{equation}\label{def_Vy}
\Vy(t,x,y) \equiv \sup_{\pi \in \MCA^\eps}\EE[U(X_T^\pi)\vert X_t^\pi = x, Y_t = y],
\end{equation}
where $X_t^\pi$ is the wealth associated to self-financing $\pi$:
\begin{equation}\label{def_Xt}
\ud X_t^\pi = \pi(t,X_t^\pi,Y_t)\mu(Y_t)\ud t + \pi(t,X_t^\pi, Y_t)\sigma(Y_t) \ud W_t,
\end{equation}
(assume the risk-free interest rate varnishes $r = 0$) and $\MCA^\eps$ is the set of strategies that $X_t^\pi$ stays nonnegative. Using singular perturbation technique, our work provide an \emph{asymptotic optimal strategy} $\pz$ within a specific class of admissible strategies $\MCA_0^\eps$ that satisfies certain assumptions:
\begin{equation}\label{def_class}
\MCA_0^\eps\left[\pzt, \pot,\alpha\right] = \left\{ \pzt + \eps^\alpha \pot\right\}_{0 \leq \eps \leq 1}.
\end{equation}

\textbf{Motivation and Related Literature.} The reason to study the proposed problem is threefold. Firstly, in the direction of asset modeling \eqref{eq_Stofy}-\eqref{eq_Yt}, the well-known implied volatility smile/smirk phenomenon leads us to employ a BS-like stochastic volatility model. Empirical studies have identified scales in stock price volatility: both fast-time scale on the order of days and slow-scale on the order of months \cite{FoPaSiSo:11}. This results in putting a parameter $\eps$ in \eqref{eq_Yt}. The slow-scale case (corresponding to large $\eps$ in \eqref{eq_Yt}), which is particularly important in long-term investments, has been studied in our previous work \cite{FoHu:16}.  An asymptotic optimality strategy is proposed therein using regular perturbation techniques. This makes it natural to extend the study to fast-varying regime, where one needs to use singular perturbation techniques. Secondly, in the direction of utility modeling, apparently not everyone's utility is of CRRA type \cite{BrNa:08}, therefore it is important to consider to work under more general utility functions. Thirdly, although it is natural to consider multiscale factor models for risky assets, with a slow factor and a fast factor as in \cite{FoSiZa:13}, more involved technical calculation and proof are required in combining them, and thus, we leave it to another paper in preparation \cite{Hu:XX}.

Our proposed strategy $\pz$ is motivated by the heuristic derivation in \cite{FoSiZa:13}, where a singular perturbation is performed to the PDE satisfied by $\Vy$. This gave a formal approximation $\Vy = \vz + \sqrt\eps{\vo} + \eps \vt + \cdots$. They then conjectured that the zeroth order strategy
\begin{equation}\label{eq_pz}
\pz(t,x,y)= -\frac{\lambda(y)}{\sigma(y)}\frac{\vz_x(t,x,y)}{\vz_{xx}(t,x,y)}, \quad \lambda(y) = \frac{\mu(y)}{\sigma(y)}
\end{equation}
reproduces the optimal value up to the first order $\vz  + \sqrt{\eps}\vo$, with $\vz$ and $\vo$ given by \eqref{eq_vz} and \eqref{eq_voofy}.

\textbf{Main Theorem.} Let $\Vyl$ (resp. $\Vyt$) be the expected utility of terminal wealth associated to $\pz$ (resp. $\pi \in \MCA_0^\eps$): $\Vyl := \EE[U(X_t^\pz) \vert X_t^\pz = x, Y_t = y],$
 and $X_t^\pz$ be the wealth process given by \eqref{def_Xt} with $\pi = \pz$ (resp. $\pi$ in $\MCA_0^\eps$). By comparing $\Vyl$ and $\Vyt$, we claim that $\pz$ performs asymptotically better up to order $\sqrt\eps$
 than the family $\left\{\pzt+\eps^\alpha\pot\right\}$. Mathematically, this is formulated as:

\begin{theo}\label{Thm_optimality}
	Under assumptions detailed in Sections~\ref{sec:assumptions} and \ref{sec_asymptoticoptimality},
	for any family of trading strategies $\MCA_0^\eps\left[ \pzt, \pot,\alpha\right] = \{\pzt + \eps^\alpha\pot\}_{0 \leq \eps\leq 1}$, the following limit exists and satisfies
	\begin{equation}\label{eq_Vztineq}
	\ell := \lim_{\eps\to 0}\bigl(\Vyt(t, x, y)-\Vyl(t, x, y))/\sqrt{\eps}\leq 0.
	\end{equation}
\end{theo}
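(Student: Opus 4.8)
The plan is to compare the two fixed-strategy values indirectly, routing the comparison through the optimal value $\Vy$. Two facts drive the argument. First, since $\Vy$ is a supremum over admissible controls, every fixed strategy is sub-optimal, so $\Vyt\le\Vy$ and $\Vyl\le\Vy$ for each $\eps\in(0,1]$. Second---and this is the substantive input---the proposed strategy is accurate to order $\sqrt\eps$, i.e. $\Vy-\Vyl=o(\sqrt\eps)$ locally uniformly. Granting these, I would write
\[
\frac{\Vyt-\Vyl}{\sqrt\eps}=\frac{\Vyt-\Vy}{\sqrt\eps}+\frac{\Vy-\Vyl}{\sqrt\eps},
\]
observe that the second term tends to $0$ while the first is $\le0$ for every $\eps$, and conclude $\limsup_{\eps\to0}(\Vyt-\Vyl)/\sqrt\eps\le0$; the existence of the limit will then follow from a first-order expansion of $\Vyt$.

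To produce the expansions I would fix the strategy, so that the associated value solves the \emph{linear} Cauchy problem $\partial_tV+\MCL^\eps V=0$, $V(T,\cdot)=U$, with
\[
\MCL^\eps=\frac{1}{\eps}\MCL_0+\frac{1}{\sqrt\eps}\MCL_1+\MCL_2,\qquad \MCL_0=\half a^2\partial_{yy}+b\partial_y,
\]
$\MCL_1=\rho\pi\sigma a\,\partial_{xy}$ and $\MCL_2=\pi\mu\,\partial_x+\half\pi^2\sigma^2\,\partial_{xx}$. Substituting $V=\vz+\sqrt\eps\,\vo+\eps\,\vt+\cdots$ and collecting powers of $\eps$: the $\eps^{-1}$ and $\eps^{-1/2}$ balances force $\vz,\vo$ to be $y$-independent (since $\MCL_0$ annihilates only functions constant in $y$, and $\MCL_1$ kills $y$-independent arguments); the $\eps^0$ solvability (Fredholm) condition yields $\partial_t\vz+\average{\MCL_2}\vz=0$; and the $\eps^{1/2}$ condition yields a linear equation for $\vo$ with source $\average{\MCL_1\vt}$. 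For the proposed strategy, maximizing the Hamiltonian $\pi\mu\,\vz_x+\half\pi^2\sigma^2\,\vz_{xx}$ pointwise in $y$ returns exactly $\pz$ and, after averaging, the Merton equation $\partial_t\vz=\half\average{\lambda^2}\frac{(\vz_x)^2}{\vz_{xx}}$, identifying $\vzl=\vz$ and $\vol=\vo$ as in \eqref{eq_vz} and \eqref{eq_voofy}.

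Running the same expansion for the competitor $\pi=\pzt+\eps^\alpha\pot$---now carrying the extra $\eps^\alpha$ pieces of $\MCL_1,\MCL_2$---delivers the first-order behavior of $\Vyt$ and hence the existence of the limit. Under the standing assumptions $\pzt$ reproduces the optimal averaged coefficients, so $\vzt=\vz$ and $\Vyt=\vz+\sqrt\eps\,\vot+o(\sqrt\eps)$; together with $\Vy=\vz+\sqrt\eps\,\vo+o(\sqrt\eps)$ this gives $\ell=\vot-\vo$, and $\Vyt\le\Vy$ forces $\vot\le\vo$, i.e. $\ell\le0$. (If instead $\pzt$ is not optimal at leading order, then $\vzt<\vz$ and $\ell=-\infty$, so the inequality is immediate.)

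The main obstacle is the accuracy estimate $\Vy-\Vyl=o(\sqrt\eps)$. I would prove it by bracketing with the corrected field $\wh V:=\vz+\sqrt\eps\,\vo+\eps\,\vt+\eps^{3/2}\,\vth$, where $\vt$ and $\vth$ solve the $\eps^0$ and $\eps^{1/2}$ corrector (Poisson) problems. Showing the HJB residual $\partial_t\wh V+\sup_\pi\MCL^\eps_\pi\wh V$ is $O(\eps)$ makes $\wh V$ an $o(\sqrt\eps)$-super-solution, whence $\Vy\le\wh V+o(\sqrt\eps)$; evaluating the linear residual $\partial_t\wh V+\MCL^\eps_{\pz}\wh V$ under the proposed strategy and invoking a Feynman--Kac bound gives $\Vyl\ge\wh V-o(\sqrt\eps)$, and sandwiching closes the estimate. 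The hard part is controlling the correctors and the Poisson solutions $\MCL_0^{-1}(\cdot)$: these are assembled from the risk-tolerance $\rz=-\vz_x/\vz_{xx}$ and its $x$-derivatives, and making the residuals genuinely $o(\sqrt\eps)$ after taking expectations requires uniform growth and regularity bounds on $\rz$---precisely the risk-tolerance estimates of Section~\ref{sec_asymptoticoptimality}.
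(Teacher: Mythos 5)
There is a genuine gap, and it sits exactly at what you call the ``substantive input'': the estimate $\Vy-\Vyl=o(\sqrt\eps)$ for a general utility. This is not a lemma you can borrow or prove with the tools at hand --- it is essentially the open conjecture from Fouque--Sircar--Zariphopoulou that motivates the paper. The HJB equation \eqref{eq_Vy} for $\Vy$ is fully nonlinear, is only known to hold in the viscosity sense, and its regularity is unclear; a rigorous first-order expansion of $\Vy$ is available only for power utility (this is precisely why Corollary~\ref{cor_optimality} is stated for power utility only, citing [FoSiZa:13, Corollary~6.8]). Your proposed fix --- making $\wh V=\vz+\sqrt\eps\,\vo+\eps\,\vt+\eps^{3/2}\,\vth$ an approximate super-solution of the HJB equation --- founders on two points: the supremum in the Hamiltonian applied to $\wh V$ produces the nonlinear term $-\bigl(\lambda \wh V_x+\rho a \wh V_{xy}/\sqrt\eps\bigr)^2/(2\wh V_{xx})$, whose residual is not obviously $O(\eps)$ uniformly, and converting a super-solution bound into $\Vy\le\wh V+o(\sqrt\eps)$ requires a comparison principle for this degenerate nonlinear equation on an unbounded domain with merely viscosity regularity --- none of which is established here. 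Note also that if your key estimate held, you would have proved asymptotic optimality of $\pz$ within the \emph{full} class $\MCA^\eps$ for general utility, strictly stronger than Theorem~\ref{Thm_optimality}; that the paper restricts to $\MCA_0^\eps$ and compares $\Vyt$ with $\Vyl$ \emph{directly} --- both being values of fixed strategies, hence solutions of \emph{linear} PDEs with classical solutions --- is exactly the device that bypasses the nonlinear equation. The paper never touches $\Vy$ in the proof: it expands $\Vyl$ rigorously (Theorem~\ref{Thm_one}, via Feynman--Kac and the risk-tolerance estimates of Proposition~\ref{prop_ssh} and Lemma~\ref{lem_unibdd}) and then expands $\Vyt$ by a singular-perturbation case analysis, deriving $\vot\le\vo$ for $\alpha=1/4$ directly from the solvability condition $\Ltx(\aves)\vot-\half\rho BD_1^2\vz-\half\aves^2 D_1\vz=0$ and the concavity of $\vz$, rather than from sub-optimality against $\Vy$.

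A secondary, more mechanical gap: your asserted generic form $\Vyt=\vz+\sqrt\eps\,\vot+o(\sqrt\eps)$ when $\pzt\equiv\pz$ is false for $\alpha<1/4$. There the first correction enters at order $\eps^{2\alpha}<\sqrt\eps$, namely $\Vyt=\vz+\eps^{2\alpha}\vtat+\MCO(\eps^{3\alpha\wedge(1/2)})$ with $\vtat<0$ whenever $\pot\not\equiv 0$, so the limit is $\ell=-\infty$ rather than $\vot-\vo$; your case analysis must track the competition between $\eps^\alpha$, $\eps^{2\alpha}$ and $\sqrt\eps$ (the paper's Table~\ref{tab_accuracy} separates $\alpha\ge 1/2$, $1/4<\alpha<1/2$, $\alpha=1/4$, $\alpha<1/4$). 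Your singular-perturbation machinery for the linear problems (the $y$-independence of the first two terms, the Fredholm solvability conditions, the identification $\vzl=\vz$, $\vol=\vo$) does match the paper's Section~\ref{sec_fast}, and your treatment of the $\pzt\not\equiv\pz$ case agrees with the paper's; but as written the argument for the inequality $\ell\le 0$ rests on an unproved --- and in this generality unavailable --- expansion of the optimal value function.
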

Proof will be given in Section~\ref{sec_asymptoticoptimality} as well as the interpretations of this inequality according to different $\alpha$'s. Our main theorem gives some insights on how to construct expansion of the optimal $\pi$, which, however, is still an open question. We remark that, in a related work \cite{Pa:13}, expansion results for $\pi^\ast$ exist under a discrete-time filtering setting.

The rest of the paper is organized as follows. Section~\ref{sec:assumptions} introduces some preliminaries of the Merton problem and standing assumptions in this paper. Section~\ref{sec_fast}  gives $\Vyl$'s first order approximation $\vz + \sqrt\eps{\vo}$. Section~\ref{sec_asymptoticoptimality} is dedicated to the proof of Theorem~\ref{Thm_optimality}. The expansion of $\Vyt$ is analyzed first, with precise derivations, while the detailed technical assumption is referred to our recent work \cite{FoHu:16}. 


\section{Preliminaries and Assumptions}\label{sec:assumptions}

In this section, we firstly review the classical Merton problem, and the notation of risk tolerance function $R(t,x;\lambda)$. Then heuristic expansion results of $\Vy$ in \cite{FoSiZa:13} are summarized. Standing assumptions of this paper are listed, as well as some estimations regarding $R(t,x;\lambda)$ and $\vz$. We refer to our recent work \cite[Section~2, 3]{FoHu:16} for proofs of all these results.

We shall first consider the case of constant $\mu$ and $\sigma$ in \eqref{eq_Stofy}. This is the classical Merton problem, which plays a crucial role in interpreting the leading order term $\vz$  and analyzing the singular perturbation. This problem has been studied intensively, for instance, in \cite{KaSh:98}. Let $X_t$ be the wealth process in this case. Using the notation in \cite{FoSiZa:13}, we denote by $M(t,x;\lambda)$ the problem value. In Merton's original work, closed-form $M(t,x;\lambda)$ was obtained when the utility $U(\cdot)$ is of power type. In general, one has the following results, with proofs given in \cite[Section~2.1]{FoHu:16} or the references therein.

%
%
\begin{prop}\label{prop_Merton}
	Assume that the utility function $U(x)$ is $C^2(0,\infty)$, strictly increasing, strictly concave, such that $U(0+)$ is finite, and satisfies the Inada and Asymptotic Elasticity conditions:
$U'(0+) = \infty$, $U'(\infty) = 0$, $\text{AE}[U] := \lim_{x\rightarrow \infty} x\frac{U'(x)}{U(x)} <1$, then, the Merton value function $M(t,x;\lambda)$ is strictly increasing, strictly concave in the wealth variable $x$, and decreasing in the time variable $t$. It is $C^{1,2}([0,T]\times \RR^+)$ and is the unique solution to the Hamilton-Jacobi-Bellman(HJB) equation, with $M(T,x;\lambda) = U(x)$,
	\begin{equation}\label{eq_value}
	 M_t+\sup_{\pi}\left\{\frac{\sigma^2}{2}\pi^2M_{xx}+\mu\pi M_x\right\}
	= 0,
	\end{equation}
	where $\lambda = \frac{\mu}{\sigma}$ is the Sharpe ratio. It is  $C^1$ w.r.t $\lambda$, and the optimal strategy is given by $\pi^\star(t,x;\lambda)=-\frac{\lambda}{\sigma}\frac{M_x(t,x;\lambda)}{M_{xx}(t,x;\lambda)}.$
\end{prop}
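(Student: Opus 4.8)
The plan is to solve this classical problem by the martingale (duality) method, exploiting that with constant $\mu$ and $\sigma$ the market is complete. First I would introduce the state-price density $Z_s = \exp\bigl(-\lambda(W_s-W_t)-\half\lambda^2(s-t)\bigr)$ for $s\geq t$, which under $r=0$ is the unique pricing kernel. A direct application of It\^o's formula to $Z_s X_s^\pi$ shows, using $\mu=\lambda\sigma$, that this product is a local martingale for every admissible $\pi$; under the requisite integrability it is a true martingale, so the self-financing dynamics \eqref{def_Xt} are equivalent to the single static budget constraint $\EE[Z_T X_T\mid \mathcal{F}_t]=x$. The terminal utility maximization then collapses to the static problem of maximizing $\EE[U(X_T)]$ over nonnegative $\mathcal{F}_T$-measurable $X_T$ subject to this one linear constraint.

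Second, I would solve the static problem by Lagrangian duality. The first-order condition yields the optimal terminal wealth $X_T^\star = I(\eta Z_T)$, where $I:=(U')^{-1}$ and the multiplier $\eta=\eta(t,x;\lambda)>0$ is pinned down by $\EE[Z_T\, I(\eta Z_T)]=x$. The Inada conditions $U'(0+)=\infty$ and $U'(\infty)=0$ guarantee that $I$ is a strictly decreasing bijection of $(0,\infty)$ onto itself, so $\eta$ exists and is unique, giving the representation $M(t,x;\lambda)=\EE[U(I(\eta Z_T))]$. Since the coefficients are constant, $Z_T$ is lognormal with parameters depending only on $\tau:=T-t$ and $\lambda$, so $M$ is an integral of $U\circ I$ against a smooth lognormal density.

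Third, I would read off the stated structural properties from this representation. Strict monotonicity and strict concavity in $x$ follow from the corresponding properties of $U$ together with the linearity of the budget set: a convex combination of the optimizers for wealths $x_1,x_2$ is feasible for the combined wealth, and concavity of $U$ propagates to $M$. Monotonicity in $t$ (decreasing) follows from the nesting of feasible terminal distributions as the horizon lengthens, since holding cash reduces a longer-horizon problem to a shorter one. The $C^{1,2}$ regularity in $(t,x)$ and the $C^1$ dependence on $\lambda$ come from differentiating under the integral sign, the smoothness of the lognormal density supplying the needed smoothing. A verification argument, applying It\^o to $M(s,X_s^\star;\lambda)$ along the optimal wealth, then confirms that $M$ solves the HJB equation \eqref{eq_value} and is its unique $C^{1,2}$ solution in the admissible class; the interior first-order condition $\sigma^2\pi M_{xx}+\mu M_x=0$ gives $\pi^\star=-\frac{\lambda}{\sigma}\frac{M_x}{M_{xx}}$, which is the unique maximizer by strict concavity of $M$.

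The main obstacle is ensuring that the value function is finite and that every differentiation-under-the-integral step is legitimate, and this is exactly where the finiteness of $U(0+)$ and the Asymptotic Elasticity condition $\text{AE}[U]<1$ enter. Together with the Inada conditions they control the growth of $I$ and of $U\circ I$ at both endpoints, guaranteeing the uniform integrability of the relevant random variables and of their $x$- and $\lambda$-derivatives against the lognormal weight; without $\text{AE}[U]<1$ the dual problem can exhibit a duality gap or the multiplier map $x\mapsto\eta$ can fail to be smooth. I expect this integrability bookkeeping, rather than any single structural identity, to be the technical core of the proof, and it is precisely here that I would invoke the estimates established in \cite{FoHu:16}.
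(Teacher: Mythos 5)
Your proposal is correct and follows essentially the same route as the paper's proof: the paper gives no in-text argument but defers to \cite[Section~2.1]{FoHu:16} and the references therein (notably \cite{KaSh:98}), where precisely this martingale-duality scheme is carried out --- complete-market reduction to the static budget constraint, optimal terminal wealth $X_T^\star = I(\eta Z_T)$ with $\eta$ pinned down by $\EE[Z_T I(\eta Z_T)] = x$, regularity via the smooth lognormal representation (equivalently the heat-equation representation of \cite{KaZa:14} invoked later in the paper), and a verification argument for the HJB equation. Your identification of the Inada, $U(0+)$ finite, and $\text{AE}[U]<1$ conditions as the source of the integrability needed for finiteness, the smoothness of $x \mapsto \eta$, and the absence of a duality gap matches where the cited proof places the technical weight.
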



We next define the risk-tolerance function $R(t,x;\lambda) = -\frac{M_x(t,x;\lambda)}{M_{xx}(t,x,;\lambda)}$,
and operators following the notations in \cite{FoSiZa:13},
\begin{equation}\label{def_dkltx}
D_k = R(t,x;\lambda)^k \partial_x^k, 
\quad \Ltx(\lambda) = \partial_t + \frac{1}{2}\lambda^2D_2 + \lambda^2D_1.
\end{equation}
By the concavity of $M(t,x;\lambda)$, $R(t,x;\lambda)$ is continuous and strictly positive. Using the relation $D_1 M = - D_2 M$, the nonlinear Merton PDE \eqref{eq_value} can be re-written in a ``linear'' way: $\Ltx(\lambda)M(t,x;\lambda) = 0$. We now mention a uniqueness result to this PDE, which will be used repeatedly in Sections~\ref{sec_fast}.
\begin{prop}\label{prop_ltxunique}
	Let $\Ltx(\lambda)$ be the operator defined in \eqref{def_dkltx}, and assume that the utility function $U(x)$ satisfies the conditions in Proposition \ref{prop_Merton}, then
	\begin{equation}\label{def_ltxpde}
	 \Ltx(\lambda)u(t,x;\lambda) = 0,\quad
	u(T,x;\lambda) = U(x),
	\end{equation}
	has a unique nonnegative solution.
\end{prop}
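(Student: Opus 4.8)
The plan is to treat \eqref{def_ltxpde} as a linear, backward, uniformly parabolic Cauchy problem and to pin down its nonnegative solutions through a probabilistic (Feynman--Kac) representation. Existence is immediate: after normalizing the utility so that $U \ge 0$ (permissible since $U(0+)$ is finite), the Merton value $M(t,x;\lambda)$ is nonnegative, and the identity $D_1 M = -D_2 M$ rewrites the HJB equation \eqref{eq_value} exactly as $\Ltx(\lambda) M = 0$ with $M(T,x;\lambda) = U(x)$. Thus $M$ is itself a nonnegative solution, and the real content of the proposition is the uniqueness assertion.

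For uniqueness, the natural device is the diffusion generated by the spatial part of $\Ltx(\lambda)$, namely
\[ dZ_s = \lambda^2 R(s,Z_s;\lambda)\,ds + \lambda R(s,Z_s;\lambda)\,dB_s, \qquad Z_t = x, \]
whose infinitesimal generator is precisely $\tfrac{1}{2}\lambda^2 R^2\partial_x^2 + \lambda^2 R\,\partial_x$. Since $M \in C^{1,2}$ and is strictly concave, $R = -M_x/M_{xx}$ inherits the regularity needed to make sense of this SDE on $(0,\infty)$; here I would invoke the estimates on $R(t,x;\lambda)$ and $R_x$ recorded in \cite{FoHu:16} (continuity, strict positivity, and the relevant growth control) to guarantee that $Z$ remains in $(0,\infty)$ and has the moments used below. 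Given any nonnegative solution $u$, Itô's formula applied to $u(s,Z_s)$ annihilates the drift via $\Ltx(\lambda) u = 0$ and leaves
\[ u(s,Z_s) = u(t,x) + \int_t^s \lambda R(r,Z_r;\lambda)\, u_x(r,Z_r;\lambda)\, dB_r, \]
so that $u(s,Z_s)$ is a nonnegative local martingale, hence a supermartingale; comparing at the terminal time gives $u(t,x) \ge \EE_{t,x}[U(Z_T)]$ for every nonnegative solution.

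The remaining, and genuinely delicate, step is to upgrade this supermartingale to a true martingale, which then forces the matching inequality and yields $u(t,x) = \EE_{t,x}[U(Z_T)]$ independently of $u$. This is where the hypotheses on $U$ do the work: the Asymptotic Elasticity condition $\text{AE}[U] < 1$ controls the growth of the candidate solution at infinity, while $U(0+)$ finite together with the positivity of $R$ controls the behaviour near the origin. Concretely, I would localize by the exit times $\tau_n$ of $Z$ from $[1/n,n]$, use these growth bounds to show that the stopped family $\{u(T \wedge \tau_n, Z_{T\wedge \tau_n})\}_n$ is uniformly integrable, and pass to the limit. The main obstacle is exactly this uniform-integrability verification: $R$ may degenerate as $x \to 0^+$ and grow as $x \to \infty$, so the argument hinges on combining the sharp estimates on $R$ from \cite{FoHu:16} with the Inada and Asymptotic Elasticity growth of $U$. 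It is precisely the restriction to \emph{nonnegative} solutions that excludes the spurious, badly growing solutions and makes the representation, and hence uniqueness, hold.
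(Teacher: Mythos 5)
Your existence half is fine and matches the paper's setup: $M(t,x;\lambda)$ solves \eqref{def_ltxpde} via $D_1M=-D_2M$ applied to \eqref{eq_value}, and is nonnegative after the normalization $U(0+)=0$. Your diffusion $Z$ is also the right object (it is exactly the optimal Merton wealth process $\widehat X$ appearing in the remark after Assumption~\ref{assump_piregularity}), and the supermartingale step is correct: for any nonnegative solution $u$, the process $u(s,Z_s)$ is a nonnegative local martingale, hence a supermartingale, giving $u(t,x)\geq \EE_{t,x}[U(Z_T)]=M(t,x;\lambda)$, i.e.\ $M$ is the \emph{minimal} nonnegative solution. The genuine gap is the upgrade to a true martingale. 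You assert that $\text{AE}[U]<1$ ``controls the growth of the candidate solution at infinity,'' but the Asymptotic Elasticity condition constrains $U$, not $u$: the proposition assumes nothing about $u$ beyond nonnegativity and the terminal condition, so there is no a priori growth bound available for the uniform-integrability verification you propose, and the localization-by-exit-times scheme has no mechanism to exclude a nonnegative solution that is a \emph{strict} supermartingale along $Z$. This is not a removable technicality: the operator $\Ltx(\lambda)$ reduces to the heat operator (see below), for which the Tychonoff example exhibits nontrivial solutions with zero terminal data; ruling such solutions out by positivity is precisely Widder's theorem, whose proof goes through the heat-kernel representation of nonnegative caloric functions (parabolic Harnack), not through It\^o localization plus UI. So as written, your argument establishes minimality of $M$, not uniqueness.

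The paper's own proof (deferred to \cite[Section~2]{FoHu:16}) exploits nonnegativity in exactly this analytic way, using the ingredient already recorded in the remark following Assumption~\ref{assump_piregularity}: by K\"allblad--Zariphopoulou \cite{KaZa:14}, there is $H(\xi,t)$ solving the heat equation $H_t+\frac{1}{2}\lambda^2H_{\xi\xi}=0$, of full range in $\xi$, with $R(t,H(\xi,t);\lambda)=H_\xi(\xi,t)$. The change of variables $x=H(\xi,t)$ sends any solution $u$ of \eqref{def_ltxpde} to $\widetilde u(t,\xi):=u(t,H(\xi,t))$, which satisfies $\widetilde u_t+\frac{1}{2}\lambda^2\widetilde u_{\xi\xi}+\lambda^2\widetilde u_\xi=0$ with terminal data $U(H(\xi,T))$; the constant drift is removed by the Galilean shift $\xi\mapsto \xi+\lambda^2 t$ (consistent with the representation of $\widehat X^{t,x}_s$ quoted in the paper), the transformation preserves sign, and uniqueness of nonnegative solutions of the heat equation on the strip is then Widder's theorem. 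If you wish to keep a probabilistic flavor, you must import a Widder-type representation result at this point; your supermartingale computation is a correct and reusable half of the picture, but it cannot close the uniqueness assertion on its own.
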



Next, we review the formal expansion results of $\Vy$ derived in \cite{FoSiZa:13}. To apply singular perturbation technique, we assume that the process $Y^{(1)}_t \stackrel{\MCD}{=} Y_{t\eps}$ is ergodic and equipped with a unique invariant distribution $\Phi$. We use the notation $\average{\cdot}$ for averaging w.r.t. $\Phi$, namely, $\average{f} = \int f \ud \Phi$. Let $\MCL_0$ be the infinitesimal generator of $Y^{(1)}$: $\MCL_0 = \frac{1}{2}a^2(y)\partial_{y}^2 + b(y)\partial_y.$
%
%
%
%
Then, by dynamic programming principle, the value function $\Vy$ solves the HJB equation in the viscosity sense:
\begin{multline}\label{eq_Vy}
\Vy_t + \frac{1}{\eps}\MCL_0\Vy + \max_{\pi \in \MCA^\eps}\Bigl(\sigma(y)^2\pi^2\Vy_{xx}/2 \\ + \pi\left(\mu(y)\Vy_x +  \rho  a(y)\sigma(y) \Vy_{xy}/\sqrt{\eps}\right)\Bigr) = 0.
\end{multline}
and its regularity is not clear. In \cite{FoSiZa:13}, a unique classical solution is assumed in order to perform heuristic derivations. Moreover, the optimizer in \eqref{eq_Vy} is well-defined:
		$\pi^\ast = -\frac{\lambda(y)\Vy_x}{\sigma(y)\Vy_{xx}} - \frac{\rho  a(y)\Vy_{xy}}{\sqrt{\eps}\sigma(y)\Vy_{xx}},$
	and the simplified HJB equation reads:
	$\Vy_t + \frac{1}{\eps} \MCL_0\Vy - \left(\lambda(y)\Vy_x + \frac{1}{\sqrt{\eps}}\rho  a(y)\Vy_{xy}\right)^2/(2\Vy_{xx}) = 0,$
	for $(t,x,y) \in [0,T] \times \RR^+ \times \RR$.  We remark that, to obtain our Theorem~\eqref{Thm_optimality}, the smooth condition is not needed, as we focus on the quantity $\Vyl$ defined in \eqref{def_Vyl}. It corresponds to a linear PDE, for which classical solutions exist.
	
The equation \eqref{eq_Vy} is fully nonlinear and is only explicitly solvable in some cases; see \cite{ChVi:05} for instance. The heuristic expansions overcome this by providing approximations to $\Vy$.  This is done by the so-called singular perturbation method, as often seen in homogenization theory. To be specific, one substitutes the expansion $\Vy = \vz + \sqrt\eps\vo + \eps\vt + \cdots$ into the above equation, establishes equations about $v^{(k)}$ by collecting terms of different orders. In \cite[Section 2]{FoSiZa:13}, this is performed for $k = 0, 1$ and we list their results as follows:
\begin{enumerate}[(i)]
	\item The \emph{leading order term} $\vz(t,x)$ is defined as the solution to the Merton PDE associated with the averaged Sharpe ratio $\overline\lambda = \sqrt{\avel}$:
	\begin{equation}
	\vz_t - \frac{1}{2}\overline\lambda^2\frac{\left(\vz_x\right)^2}{\vz_{xx}} = 0, \quad \vz(T,x) = U(x),\label{eq_vz}
	\end{equation}
	and by Proposition~\ref{prop_Merton} $\vz$ is identified as:
	\begin{equation}\label{eq_vzandmerton}
	\vz(t,x) = M\bigl(t,x;\overline\lambda\bigr).
	\end{equation}
	\item The \emph{first order correction} $\vo$ is identified as the solution to the linear PDE:
	\begin{equation}\label{eq_voofy}
	\vo_t + \frac{\overline\lambda^2}{2}(\frac{\vz_x}{\vz_{xx}})^2\vo_{xx} - \overline\lambda^2 \frac{\vz_x}{\vz_{xx}}\vo_x = \frac{\rho}{2} BD_1^2\vz,
	\end{equation}
	with $\vo(T,x)= 0$. The constant
	$B = \average{\lambda a \theta'}$, and $\theta(y)$ solves $\MCL_0 \theta(y) = \lambda^2(y) - \overline{\lambda}^2.$
	Rewrite equation \eqref{eq_voofy} in terms of the operators in \eqref{def_dkltx}, $\vo$ solves the following PDE which admits a unique solution:
	\begin{equation}\label{eq_vo}
	\Ltx(\overline\lambda)\vo = \half\rho BD_1^2\vz, \quad \vo(T,x) = 0.
	\end{equation}
	\item $\vo$ is explicitly given in term of $\vz$ by
	$\vo(t,x) = -\frac{1}{2}(T-t)\rho BD_1^2\vz(t,x).$
\end{enumerate}

Now we introduce the assumptions on the utility $U(\cdot)$ and the state processes $(S_t, X_t^\pz, Y_t)$, and refer to \cite[Section~2]{FoHu:16} for further discussions and remarks.



\begin{assump}\label{assump_U}
Throughout the paper, we make the following assumptions on the utility $U(x)$:
\begin{enumerate}[(i)]
  \item\label{assump_Uregularity}  $U(x)$ is $C^7(0,\infty)$, strictly increasing, strictly concave and satisfying the following conditions:
$U'(0+) = \infty, \quad U'(\infty) = 0, \quad \text{AE}[U] := \lim_{x\rightarrow \infty} x\frac{U'(x)}{U(x)} <1.$
\item\label{assump_Ubddbelow}$U(0+)$ is finite. Without loss of generality, $U(0+) = 0$.
  \item\label{assump_Urisktolerance} Denote by $R(x)$ the risk tolerance,
  $R(x) := -\frac{U'(x)}{U''(x)}.$
Assume that $R(0) = 0$, $R(x)$ is strictly increasing and $R'(x) < \infty$ on $[0,\infty)$, and there exists $K\in\RR^+$, such that for $x \geq 0$, and $ 2\leq i \leq 5$,
      \begin{equation}\label{assump_Uiii}
      \abs{\partial_x^iR^i(x)} \leq K.
      \end{equation}
  \item\label{assump_Ugrowth} Define the inverse function of the marginal utility $U'(x)$ as $I: \RR^+ \to \RR^+$, $I(y) = U'^{(-1)}(y)$, and assume that, for some positive $\alpha$, $I(y)$ satisfies the polynomial growth condition:
$I(y) \leq \alpha + \kappa y^{-\alpha}.$
\end{enumerate}
\end{assump}

Assumption~\ref{assump_U}\eqref{assump_Ubddbelow} is a sufficient condition, and rules out the cases $U(x) = \frac{x^{\gamma}}{\gamma}$, $\gamma < 0$, and $U(x) = \log(x)$. However, all theorems in the paper still hold, as it is to ensure that terms in \eqref{eq_Eylprob} are of the form \eqref{eq_opD}, which is automatically satisfied for aforementioned cases. Next are the model assumptions.


\begin{assump}\label{assump_valuefunc} We make the following assumptions on the state processes $(S_t, X_t^\pz, Y_t)$:
\begin{enumerate}[(i)]
	\item\label{assump_valuefuncSZ} For any starting points $(s, y)$ and fixed $\eps$, the system of SDEs \eqref{eq_Stofy}--\eqref{eq_Yt} has a unique strong solution $(S_t, Y_t)$. The functions $\lambda(y)$ and $a(y)$ have polynomial growth.
	
	\item\label{assump_valuefuncZmoment} The process $Y^{(1)}$ with infinitesimal generator $\MCL_0$ is ergodic with a unique invariant distribution, and admits moments of any order uniformly in $t \leq  T$:
	$\sup_{t \leq T}\left\{ \EE\abs{Y_t^{(1)}}^k\right\} \leq C(T,k)$. The solution $\phi(y)$ of the Poisson equation $\MCL_0 \phi = g$ is assumed to be polynomial for polynomial functions $g$.
	
%
         \item\label{assump_valuefuncX} The wealth process $X_\cdot^\pz$ is in  $L^2([0,T]\times \Omega)$ uniformly in $\eps$ , i.e.,
             $\EE_{(0,x,y)}\left[\int_0^T \left(X_s^\pz\right)^2 \ud s\right] \leq C_2(T,x,y)$,
             where $C_2(T,x,y)$ is independent of $\eps$ and $\EE_{(0,x,y)}[\cdot] = \EE[\cdot\vert X_0 = x, Y_0 = y]$.

       \end{enumerate}
\end{assump}

%
%


Here we provide several estimations of the risk tolerance function $R(t,x;\lambda)$ and the zeroth order value function $\vz$, which are crucial in the proof of Theorem \ref{Thm_one}. 

By Proposition~\ref{prop_Merton} and the relation \eqref{eq_vzandmerton}, $\vz$ is concave in the wealth variable $x$, and decreasing in the time variable $t$, therefore has a linear upper bound, for $(t,x) \in [0,T]\times \RR^+$:
$\vz(t,x) \leq \vz(0,x) \leq c + x$, for some constant $c$.
Combining it with Assumption~\ref{assump_valuefunc}\eqref{assump_valuefuncX}, we deduce:

\begin{lem}\label{lem_unibdd}
	Under Assumption~\ref{assump_U} and \ref{assump_valuefunc}, the process $\vz(\cdot, X_\cdot^\pz)$ is in $L^2([0,T]\times \Omega)$ uniformly in $\eps$, i.e. $\forall (t,x) \in [0,T]\times\RR^+$:
	$\EE_{(t,x)}\left[\int_t^T \left(\vz(s,X_s^\pz)\right)^2 \ud s\right] \leq C_3(T,x),$
	where $\vz(t,x)$ satisfies equation \eqref{eq_vz}.
\end{lem}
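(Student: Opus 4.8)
The plan is to bound $\EE_{(t,x)}\left[\int_t^T \left(\vz(s,X_s^\pz)\right)^2 \ud s\right]$ by exploiting the linear upper bound on $\vz$ already established just above the statement. Since $\vz$ is concave in $x$ and decreasing in $t$, we have $\vz(s,X_s^\pz) \leq \vz(0,X_s^\pz) \leq c + X_s^\pz$ for all $s \in [t,T]$. The one subtlety is that $\vz$ could in principle be negative, so squaring does not immediately preserve the inequality; however, Assumption~\ref{assump_U}\eqref{assump_Ubddbelow} normalizes $U(0+) = 0$, and since $\vz(s,\cdot) = M(s,\cdot;\overline\lambda)$ inherits $M(s,0+;\overline\lambda) = U(0+) = 0$ together with monotonicity in $x$, the function $\vz$ is nonnegative on $[0,T]\times\RR^+$. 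Hence $0 \leq \vz(s,X_s^\pz) \leq c + X_s^\pz$ pointwise, which legitimizes squaring.

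First I would record the two-sided bound $0 \leq \vz(s,X_s^\pz) \leq c + X_s^\pz$ and square it to obtain $\left(\vz(s,X_s^\pz)\right)^2 \leq (c + X_s^\pz)^2 \leq 2c^2 + 2\left(X_s^\pz\right)^2$, using $(a+b)^2 \leq 2a^2 + 2b^2$ (and recalling $X_s^\pz \geq 0$ since $\pz \in \MCA^\eps$ keeps wealth nonnegative). Next I would integrate over $[t,T]$ and take expectations, giving
\begin{equation}
\EE_{(t,x)}\left[\int_t^T \left(\vz(s,X_s^\pz)\right)^2 \ud s\right] \leq 2c^2(T-t) + 2\,\EE_{(t,x)}\left[\int_t^T \left(X_s^\pz\right)^2 \ud s\right].
\end{equation}
Finally I would invoke Assumption~\ref{assump_valuefunc}\eqref{assump_valuefuncX}, which bounds the second term by a constant $C_2(T,x,y)$ independent of $\eps$, so the whole right-hand side is dominated by a constant $C_3(T,x) := 2c^2 T + 2C_2(T,x,y)$ that does not depend on $\eps$. (Strictly the assumption is stated from starting time $0$; the same $L^2$ bound with a general initial time $t$ follows from the flow/Markov property of the wealth dynamics, absorbing any dependence into the constant.)

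The main obstacle, such as it is, is purely the sign issue: verifying that $\vz$ is genuinely nonnegative so that the elementary inequality $\vz^2 \leq (c+X)^2$ is valid, rather than merely $\vz \leq c + X$. This is exactly where Assumption~\ref{assump_U}\eqref{assump_Ubddbelow} (the normalization $U(0+)=0$, which rules out logarithmic and negative-power utilities) does its work, since otherwise $\vz$ could be unbounded below and squaring would reverse the bound. Everything else is a routine application of the pointwise linear majorant together with the uniform-in-$\eps$ $L^2$ control on $X^\pz$ supplied by Assumption~\ref{assump_valuefunc}\eqref{assump_valuefuncX}.
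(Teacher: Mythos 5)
Your proof is correct and follows essentially the same route as the paper, which obtains the lemma in one line by combining the linear upper bound $\vz(s,x)\leq \vz(0,x)\leq c+x$ (from concavity in $x$ and monotonicity in $t$) with the uniform-in-$\eps$ $L^2$ bound on $X^\pz$ from Assumption~\ref{assump_valuefunc}\eqref{assump_valuefuncX}. The only addition is that you make explicit the nonnegativity of $\vz$ (needed to square the bound), which the paper leaves implicit but which is indeed guaranteed by the normalization $U(0+)=0$ and the nonnegativity of the Merton value function (cf.\ Proposition~\ref{prop_ltxunique}).
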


\begin{prop}\label{prop_mono}
Suppose the risk tolerance $R(x) = -\frac{U'(x)}{U''(x)}$ is strictly increasing for all $x$ in $[0,\infty)$ (this is part of Assumption \ref{assump_U} \eqref{assump_Urisktolerance}), then, for each $t \in [0,T)$,  $R(t,x;\lambda)$ is strictly increasing in the wealth variable $x$.
\end{prop}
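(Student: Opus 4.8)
The plan is to reduce the monotonicity of $R(t,x;\lambda)$ in $x$ to a maximum-principle statement for its spatial derivative $R_x$. The starting point is that the Merton risk-tolerance function itself solves a \emph{fast-diffusion} equation. Rewriting the reduced Merton PDE as $\Ltx(\lambda)M=0$ and using $M_x=-R\,M_{xx}$ gives $M_t=\tfrac12\lambda^2R^2M_{xx}$; differentiating this relation twice in $x$ and repeatedly eliminating the ratios $M_{xxx}/M_{xx}$ and $M_{xxxx}/M_{xx}$ through $R=-M_x/M_{xx}$, one finds that all the resulting terms cancel except one, yielding
\begin{equation}
R_t+\tfrac12\lambda^2R^2R_{xx}=0,\qquad R(T,x;\lambda)=R(x):=-\frac{U'(x)}{U''(x)}.
\end{equation}
This is the identity I would establish first; it also appears in \cite{FoSiZa:13,FoHu:16}. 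The only inputs it needs are that $M$ is concave of class $C^{1,2}$, so that $R>0$ is well defined, and enough higher regularity to differentiate, both guaranteed under Assumption~\ref{assump_U}.

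Next I would differentiate this equation once more in $x$ and set $q:=R_x$. Treating the already-determined $R$ as a known coefficient, $q$ solves the \emph{linear} backward parabolic equation
\begin{equation}
q_t+\tfrac12\lambda^2R^2\,q_{xx}+\lambda^2RR_x\,q_x=0,\qquad q(T,x)=R'(x).
\end{equation}
The value of this reformulation is that the equation has no zeroth-order term, its diffusion coefficient $\tfrac12\lambda^2R^2$ is nonnegative (and strictly positive in the interior, since $R>0$ there), and its terminal datum $R'(x)\ge0$ is nonnegative and, because $R(x)$ is strictly increasing by Assumption~\ref{assump_U}\eqref{assump_Urisktolerance}, strictly positive on a subset of positive Lebesgue measure in every subinterval.

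To conclude $q>0$ for $t<T$ I would use the Feynman--Kac representation $q(t,x)=\EE_{t,x}[R'(\Xi_T)]$, where $\Xi$ is the diffusion with generator $\tfrac12\lambda^2R^2\partial_x^2+\lambda^2RR_x\partial_x$ started at $x$ at time $t$, rather than a bare analytic comparison, because the domain $(0,\infty)$ is unbounded and the coefficients degenerate as $x\downarrow0$ (recall $R(0)=0$). Since $\Xi$ stays in $(0,\infty)$ and $\Xi_T$ has a strictly positive density in the interior, while $R'\ge0$ is positive on a set of positive measure, the expectation is strictly positive, giving $R_x(t,x;\lambda)=q(t,x)>0$ for every $t<T$ and $x>0$, which is exactly the claim.

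The main obstacle I anticipate is the rigorous justification of this last step on the degenerate, unbounded state space: one must verify well-posedness of the SDE for $\Xi$ together with integrability of $R'(\Xi_T)$, which requires controlling the growth and regularity of $R$ and $R_x$ — precisely the bounds $\abs{\partial_x^iR^i}\le K$ furnished by Assumption~\ref{assump_U}\eqref{assump_Urisktolerance} — and the nondegeneracy of the law of $\Xi_T$ in the interior so that strict, rather than merely weak, positivity propagates down from the terminal time. A purely analytic route via the strong maximum principle is available, but it demands the same boundary and growth care; I would adopt whichever is cleaner once the estimates on $R$ are assembled.
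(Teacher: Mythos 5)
Your reduction is correct as far as it goes: the fast-diffusion equation $R_t+\half\lambda^2R^2R_{xx}=0$ with $R(T,x;\lambda)=R(x)$ is a known identity (it is used in \cite{FoSiZa:13} and \cite{FoHu:16}), and differentiating it does give the zeroth-order-free linear equation $q_t+\half\lambda^2R^2q_{xx}+\lambda^2RR_x\,q_x=0$, $q(T,x)=R'(x)$, for $q=R_x$. But your proof stops exactly where the difficulty is. The representation $q(t,x)=\EE_{t,x}[R'(\Xi_T)]$ is asserted, not established: you would need (a) well-posedness of the $\Xi$-SDE, whose coefficients vanish at $x=0$ and involve $R_x$ itself — note that the interior-time bound $\abs{R_x}\leq K_0$ is the $j=0$ case of Proposition~\ref{prop_ssh}, which is stated \emph{after} Proposition~\ref{prop_mono} and proved in \cite{FoHu:16} by other means, whereas the bounds \eqref{assump_Uiii} from Assumption~\ref{assump_U}\eqref{assump_Urisktolerance} that you invoke hold only at $t=T$ and only for $2\leq i\leq 5$; (b) a uniqueness class for the degenerate linear PDE on the unbounded domain $(0,\infty)$ with no boundary datum at $0$, so that $q=R_x$ is in fact the Feynman--Kac solution and not some other one; and (c) strict positivity of the law of $\Xi_T$ on interior sets, which for a degenerate diffusion is precisely the content of the strictness you are trying to prove and does not come for free. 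Since you explicitly defer all of these (``I would adopt whichever is cleaner once the estimates on $R$ are assembled''), the argument as written has a genuine gap at its decisive step.

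The paper, which defers this proof to \cite[Sections 2--3]{FoHu:16}, closes that gap by avoiding the abstract diffusion altogether: it uses the explicit heat-equation representation of \cite{KaZa:14} — the same device invoked in the Remark following Assumption~\ref{assump_piregularity}. There one has a function $H(z,t)$, strictly increasing and of full range in $z$, solving $H_t+\half\lambda^2H_{zz}=0$ with $H(z,T)=I(e^{-z})$, such that $R(t,x;\lambda)=H_z\bigl(H^{(-1)}(x,t),t\bigr)$, whence $R_x(t,x;\lambda)=\bigl(H_{zz}/H_z\bigr)\bigl(H^{(-1)}(x,t),t\bigr)$. Since $H_{zz}(z,T)=R'\bigl(H(z,T)\bigr)H_z(z,T)=R'(H)R(H)\geq 0$, not a.e.\ zero because $R$ is strictly increasing, and $H_{zz}(\cdot,t)$ is the convolution of this terminal datum with the \emph{explicit, strictly positive} Gaussian kernel, one gets $H_{zz}>0$ for $t<T$ and hence $R_x>0$ with no SDE well-posedness, uniqueness-class, or support argument needed. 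In fact your $\Xi$ is nothing but $H(Z_t,t)$ for a drifted Brownian $Z$ (one checks $\lambda^2RR_x=\lambda^2H_{zz}$ and $\lambda R=\lambda H_z$), so the change of variables replaces the unknown transition density of $\Xi$ by the heat kernel; if you wish to keep your structure, the shortest rigorous completion is exactly this substitution.
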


\begin{prop}\label{prop_ssh}
Under Assumption \ref{assump_U}, the risk tolerance function $R(t,x; \lambda)$ satisfies: $\forall 0 \leq j \leq 4$, $\exists K_j >0$, such that $\forall (t,x) \in [0,T)\times \RR^+ $, $\abs{R^j(t,x;\lambda)\left(\partial_x^{j+1}R(t,x;\lambda)\right)} \leq K_j.$
Or equivalently, $\forall 1 \leq j \leq 5$, there exists $\widetilde K_j>0$, such that
$\abs{\partial_x^j R^j(t,x;\lambda)} \leq \widetilde K_j.$
Moreover,
one has $R(t,x;\lambda) \leq K_0x.$
\end{prop}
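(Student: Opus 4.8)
The starting point is the autonomous equation governing the risk tolerance. Differentiating the linearized Merton equation $\Ltx(\lambda)M=0$ (equivalently $M_t-\half\lambda^2 M_x^2/M_{xx}=0$) and using $R(t,x;\lambda)=-M_x/M_{xx}$, one obtains, as established in the cited work, the backward fast-diffusion equation
\begin{equation}
R_t + \half\lambda^2 R^2 R_{xx} = 0, \qquad R(T,x;\lambda)=R(x)=-U'(x)/U''(x),
\end{equation}
on $[0,T)\times\RR^+$, with terminal datum the utility risk tolerance of Assumption~\ref{assump_U}\eqref{assump_Urisktolerance}. Since $U(0+)=0$ and $R(0)=0$, zero wealth is absorbing for the Merton problem, so $R(t,0;\lambda)=0$ for every $t$; note that the diffusion coefficient $\half\lambda^2R^2$ degenerates there, which removes the need for a boundary condition at $x=0$ but makes the equation degenerate parabolic.

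Before propagating the bounds I would record that the two displayed families are equivalent. Writing $a_i:=R^i\partial_x^{i+1}R$, a direct induction on the Leibniz expansion of $\partial_x^j(R^j)$ shows that each $\partial_x^jR^j$ is a universal polynomial in $a_0,\dots,a_{j-1}$ (for instance $\partial_x^2R^2=2a_0^2+2a_1$ and $\partial_x^3R^3=6a_0^3+18a_0a_1+3a_2$), and this triangular system is invertible. Hence a uniform bound on $\{a_i\}_{0\le i\le 4}$ is equivalent to a uniform bound on $\{\partial_x^jR^j\}_{1\le j\le 5}$, with constants depending only on one another. I would therefore establish the bounds for the scale-invariant quantities $g_j:=\partial_x^jR^j$.

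The core step is an induction in $j$ driven by the maximum principle, the point being that the weight $R^j$ exactly compensates both the degeneracy at $x=0$ and the quadratic nonlinearity produced by bare differentiation of the $R$-equation. Differentiating the $R$-equation once gives a homogeneous linear parabolic equation for $g_1=R_x$, with diffusion $\half\lambda^2R^2$ and no zeroth-order term, so the maximum principle yields $\inf_xR'\le g_1\le\sup_xR'$ and in particular $|g_1|\le\widetilde K_1$ from the terminal bounds of Assumption~\ref{assump_U}\eqref{assump_Urisktolerance}–\eqref{assump_Uiii}. Proceeding inductively, I would show that each $g_j$ solves a linear parabolic equation of the form $\partial_t g_j+\half\lambda^2R^2\partial_x^2 g_j+b_j\,\partial_x g_j+c_j\,g_j=f_j$, where the coefficients $b_j,c_j$ and the forcing $f_j$ are polynomial in $g_0\equiv1,g_1,\dots,g_{j-1}$ — all already bounded — so the system closes in a triangular fashion. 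Applying the maximum principle to each such equation, after controlling the growth of $g_j$ as $x\to\infty$ via the polynomial-growth structure of Assumption~\ref{assump_U}\eqref{assump_Ugrowth} and Assumption~\ref{assump_valuefunc}, propagates the terminal bounds \eqref{assump_Uiii} backward in time to all $t<T$, giving $|g_j|\le\widetilde K_j$ for $1\le j\le5$ and, via the equivalence above, $|a_j|\le K_j$ for $0\le j\le4$. The linear bound is then immediate: since $R(t,0;\lambda)=0$ and $0<\partial_xR=g_1\le\widetilde K_1$ (positivity of $R_x$ following from Proposition~\ref{prop_mono}), integrating in $x$ gives $R(t,x;\lambda)\le\widetilde K_1 x=:K_0x$.

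I expect the two genuine difficulties to be, first, the bookkeeping that verifies the triangular linear structure of the $g_j$-equations — one must check that the weighting by $R^j$ cancels every term that is not either lower order or a bounded coefficient, so that no residual nonlinearity in $g_j$ survives; and second, the rigorous application of the maximum principle on the unbounded domain $[0,T)\times\RR^+$ for an operator that degenerates at $x=0$, which requires a priori growth control at infinity together with care at the degenerate boundary. A Feynman–Kac representation for the linearized operator is the most robust way to handle both the growth control and the degeneracy simultaneously.
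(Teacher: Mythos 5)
Your reduction to the scale-invariant quantities $g_j=\partial_x^jR^j$ and the equivalence with the family $a_i:=R^i\partial_x^{i+1}R$ are correct (your formulas $\partial_x^2R^2=2a_0^2+2a_1$, $\partial_x^3R^3=6a_0^3+18a_0a_1+3a_2$ check out), and the maximum-principle bound for $g_1=R_x$ is fine. But the core inductive claim --- that each $g_j$ satisfies a linear parabolic equation whose coefficients and forcing are polynomial in $g_0,\dots,g_{j-1}$, ``all already bounded,'' so that the system closes triangularly --- is false as stated, and it already fails at $j=3$. Writing $u=R^2$, $w=R^3$, the fast-diffusion equation gives $w_t+\half\lambda^2uw_{xx}=\half\lambda^2u_xw_x$, and applying $\partial_x^3$ yields
\begin{equation}
\partial_tg_3+\half\lambda^2u\,\partial_x^2g_3+\lambda^2u_x\,\partial_xg_3=\lambda^2u_{xxx}w_{xx}+\half\lambda^2u_{xxxx}w_x,
\end{equation}
and a direct computation of the right side gives $\lambda^2(6a_0a_1+2a_2)(6a_0^2+3a_1)+\half\lambda^2\left(18a_0a_1^2+24a_0^2a_2+6a_0a_3\right)$. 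The term $3\lambda^2a_0a_3$ involves $a_3=R^3\partial_x^4R$, a quantity of the same order as $g_4$, not controlled at stage $j=3$; chasing it naively up the hierarchy never terminates, since the $g_5$ equation would call on $a_5=R^5\partial_x^6R$, beyond what Assumption~\ref{assump_U}\eqref{assump_Uiii} bounds at $t=T$. The step can be repaired: since $Rg_{3,x}=3a_3+36a_0^2a_1+18a_1^2+24a_0a_2$, the offending term equals $\half\lambda^2u_x\,\partial_xg_3$ minus a polynomial in $a_0,a_1,a_2$, so reabsorbing it into the drift leaves $\partial_tg_3+\half\lambda^2u\,\partial_x^2g_3+\half\lambda^2u_x\,\partial_xg_3=\widetilde f_3$ with bounded forcing --- but this also contradicts your assertion that the coefficients are polynomial in bounded lower-order quantities (the drift is $\lambda^2RR_x\partial_x$-type, only linearly growing), and the analogous reabsorptions at $j=4,5$ are exactly where the proof lives; your sketch asserts rather than verifies them. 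The danger is not hypothetical: in the bare variables, $a_1=RR_{xx}$ satisfies $\partial_ta_1+\half\lambda^2R^2\partial_x^2a_1+\lambda^2RR_x\partial_xa_1+\lambda^2a_1^2=0$, a Riccati-type equation that can blow up backward in time for negative data, so boundedness genuinely hinges on the exact cancellations in the $g$-variables. A second, unresolved layer is a priori regularity and growth: to write these equations and apply Feynman--Kac you need $R(\cdot,\cdot;\lambda)$ to be $C^{1,6}$-smooth up to $t=T$ with polynomial growth of the $g_j$, which you assume implicitly (and Assumption~\ref{assump_valuefunc}, which concerns the state processes, gives you nothing here).

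For comparison, the paper itself does not prove the proposition in-text; it defers to \cite[Sections 2--3]{FoHu:16}, where the argument avoids the hierarchy altogether via the K\"allblad--Zariphopoulou representation: $R(t,x;\lambda)=H_x\bigl(H^{(-1)}(x,t),t\bigr)$, with $H$ solving the linear heat equation $H_t+\half\lambda^2H_{xx}=0$, $H(x,T)=I(e^{-x})$. Each $R^j\partial_x^{j+1}R$ is then a universal polynomial in the ratios $\partial_x^kH/H_x$; the terminal ratios are bounded by Assumption~\ref{assump_U}\eqref{assump_Urisktolerance} through the identity $G'=R(G)$ for $G(x)=I(e^{-x})$; and since every $\partial_x^kH$ solves the same heat equation as the positive function $H_x$, the pointwise bound $\abs{\partial_x^kH}\leq CH_x$ propagates backward from $t=T$ by the heat kernel. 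This one device simultaneously supplies the smoothness, the growth control, the positivity $R_x>0$ behind Proposition~\ref{prop_mono}, and, with $R(t,0;\lambda)=0$, the linear bound $R\leq K_0x$ --- precisely the three points your PDE-induction leaves open. Your route could likely be completed, but it would require carrying out the drift-reabsorption bookkeeping at every order and a separate regularity argument, both of which the representation approach gets for free.
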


%

\section{Portfolio performance of a given strategy}\label{sec_fast}
Recall the strategy $\pz$ defined in \eqref{eq_pz},
and assume $\pz$ is admissible. In this section, we are interested in studying its performance. That is, to give approximation results of  the value function associated to $\pz$,  denote by $\Vyl$:
\begin{equation}\label{def_Vyl}
\Vyl(t,x,y) = \EE\left\{U(X_T^\pz) | X_t^\pz = x, Y_t = y\right\},
\end{equation}
where $U(\cdot)$ is a general utility function satisfying Assumption \ref{assump_U},  $X_t^\pz$ is the wealth process associated to the strategy $\pz$ and $Y_t$ is the fast factor.
Our main result of this section is the following, with the proof delayed in Section~\ref{sec_accproof}.

\begin{theo}\label{Thm_one}
	Under assumptions \ref{assump_U} and \ref{assump_valuefunc}, the residual function $E(t,x,y)$ defined by
	$E(t,x,y):= \Vyl(t,x,y) - \vz(t,x)-\sqrt\eps\vo(t,x),$
	is of order $\eps$.  In other words, $\forall (t,x,y) \in [0,T]\times\RR^+ \times \RR$, $E(t,x,y) \leq C\eps$, for some constant $C$ depending on $(t,x,y)$ but not on $\eps$.
\end{theo}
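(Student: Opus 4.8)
The plan is to show that $\Vyl$, as a conditional expectation of a terminal payoff along the wealth process $X_t^{\pz}$, is a (classical or viscosity) solution of a \emph{linear} PDE obtained by freezing the strategy at $\pz$, and then to verify that $\vz+\sqrt\eps\vo$ solves the same PDE up to an $O(\eps)$ residual. First I would write down the generator of the two-dimensional diffusion $(X_t^{\pz},Y_t)$ under the strategy $\pz$ from \eqref{eq_pz} and \eqref{def_Xt}--\eqref{eq_Yt}, so that by Feynman--Kac $\Vyl$ satisfies
\begin{equation}
\Vyl_t + \frac{1}{\eps}\MCL_0\Vyl + \pz\mu(y)\Vyl_x + \frac12\pz^2\sigma(y)^2\Vyl_{xx} + \frac{1}{\sqrt\eps}\rho a(y)\sigma(y)\pz\Vyl_{xy} = 0,
\end{equation}
with terminal condition $\Vyl(T,x,y)=U(x)$. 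Substituting the explicit form of $\pz$ turns the $\Vyl_x,\Vyl_{xx}$ terms into expressions built from the risk-tolerance operators $D_1,D_2$ acting through $R(t,x;\overline\lambda)$, since $\pz$ is defined via $\vz_x/\vz_{xx}$.

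Next I would insert the candidate expansion $\Vyl = \vz + \sqrt\eps\vo + \eps\vt + \cdots$ into this linear PDE and collect powers of $\sqrt\eps$. The leading $\eps^{-1}$ and $\eps^{-1/2}$ terms force $\vz$ and $\vo$ to be independent of $y$ (they are functions of $(t,x)$ only) and to lie in the kernel of $\MCL_0$; the $O(1)$ equation should reproduce \eqref{eq_vz} for $\vz$, and the $O(\sqrt\eps)$ equation should reproduce \eqref{eq_vo} for $\vo$ after using the Poisson equation $\MCL_0\theta = \lambda^2-\overline\lambda^2$ to eliminate the $y$-dependent source and identify the constant $B=\average{\lambda a\theta'}$. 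Crucially, because $\pz$ is frozen (not re-optimized at each order), these are precisely the \emph{linear} equations \eqref{eq_vz} and \eqref{eq_vo} already identified in the excerpt, so the formal matching is consistent with Propositions~\ref{prop_Merton} and \ref{prop_ltxunique}.

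The rigorous core is to control the residual $E = \Vyl - \vz - \sqrt\eps\vo$. My plan is to define a corrector $\vt$ (and possibly a $y$-dependent term $\vth$ solving an auxiliary Poisson equation) so that $\Phi := \vz + \sqrt\eps\vo + \eps\vt + \eps^{3/2}\vth$ solves the linear PDE up to a source term that is genuinely $O(\eps)$ uniformly on $[0,T]\times\RR^+\times\RR$. Then, applying It\^o's formula to $\Phi(s,X_s^{\pz},Y_s)$ between $t$ and $T$ and taking conditional expectation, the martingale part vanishes and the drift part is exactly this $O(\eps)$ source integrated along the path; comparing $\Phi(T,\cdot)=U$ with $\Vyl(T,\cdot)=U$ and using $\Vyl=\EE[U(X_T^{\pz})]$ gives
\begin{equation}
\Vyl(t,x,y) - \Phi(t,x,y) = \EE_{(t,x,y)}\left[\int_t^T \bigl(O(\eps)\text{ source}\bigr)(s,X_s^{\pz},Y_s)\ud s\right].
\end{equation}
The integrability needed to conclude this is $O(\eps)$ comes from Lemma~\ref{lem_unibdd}, Assumption~\ref{assump_valuefunc}\eqref{assump_valuefuncX}, and the growth bounds on $\lambda,a,\theta$ in Assumption~\ref{assump_valuefunc}; the polynomial growth of the correctors must be checked using the estimates on $R(t,x;\overline\lambda)$ in Propositions~\ref{prop_mono}--\ref{prop_ssh} (in particular $\abs{\partial_x^jR^j}\leq\widetilde K_j$ and $R\leq K_0x$), which is why those a priori bounds were set up. Since $\eps^{3/2}\vth$ and the difference between $\Phi$ and $\vz+\sqrt\eps\vo$ are themselves $O(\eps)$, the bound on $E$ follows.

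\textbf{Main obstacle.} I expect the hard part to be establishing the \emph{uniform-in-$\eps$} polynomial-growth and regularity estimates on the correctors $\vt,\vth$ and on the accumulated $O(\eps)$ source, so that the It\^o/Feynman--Kac argument produces a constant independent of $\eps$. This requires differentiating $\vz=M(t,x;\overline\lambda)$ repeatedly and controlling products of $R(t,x;\overline\lambda)$ with its derivatives — precisely the content of Proposition~\ref{prop_ssh} — and then showing these composite quantities stay $L^2$ along $X_\cdot^{\pz}$ uniformly in $\eps$ via Lemma~\ref{lem_unibdd}. Verifying that the singular $1/\eps$ and $1/\sqrt\eps$ terms cancel exactly (rather than leaving an unbounded residual as $\eps\to0$) through the Poisson-equation centering is the delicate step that makes the singular perturbation rigorous.
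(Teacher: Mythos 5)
Your proposal follows essentially the same route as the paper: Feynman--Kac for the linear PDE \eqref{eq_Vylop} satisfied by $\Vyl$, the expansion identifying $\vz$ and $\vo$ through the solvability (centering) conditions, an augmented approximation $\vz+\sqrt\eps\vo+\eps\vtl+\eps^{3/2}\vthl$, a probabilistic representation of the residual as in \eqref{eq_Eylprob}, and uniform-in-$\eps$ bounds via Proposition~\ref{prop_ssh}, the concavity of $\vz$, and Lemma~\ref{lem_unibdd}. One correction: your step ``comparing $\Phi(T,\cdot)=U$ with $\Vyl(T,\cdot)=U$'' is not right, since $\vtl(T,x,y)=-\half\theta(y)D_1\vz(T,x)$ and $\vthl(T,x,y)=\half\rho\theta_1(y)BD_1^2\vz(T,x)$ are genuinely $y$-dependent and so cannot be cancelled by the free integration terms $C_1(t,x)$, $C_2(t,x)$; the representation therefore carries two additional terminal-condition expectations of sizes $\eps$ and $\eps^{3/2}$ (the last two terms in \eqref{eq_Eylprob}), which the paper bounds using Assumption~\ref{assump_U}\eqref{assump_Uiii}, the concavity of $U$, and the moment bounds on $Y$ --- a minor repair fully within your own framework, since these terms are of the same form \eqref{eq_opD} you already control.
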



\begin{cor}\label{cor_optimality}
	In the case of power utility $U(x) = \frac{x^\gamma}{\gamma}$, $\pz$ is asymptotically optimal in $\MCA^\eps(t,x,y)$ up to order $\sqrt\eps$.
\end{cor}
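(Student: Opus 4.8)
The plan is to sandwich $\Vyl$ between $\Vy$ and the first-order approximation, exploiting that the restriction to power utility makes the expansion of the true value function rigorous. Because $\pz \in \MCA^\eps(t,x,y)$, the very definition of the value function in \eqref{def_Vy} gives the free lower bound $\Vyl \le \Vy$, so $\Vy - \Vyl \ge 0$. On the other hand, Theorem~\ref{Thm_one} already yields $\Vyl = \vz + \sqrt\eps\,\vo + O(\eps)$. Hence the corollary reduces to establishing the matching two-term expansion for the optimizer itself,
\begin{equation}
\Vy(t,x,y) = \vz(t,x) + \sqrt\eps\,\vo(t,x) + O(\eps),
\end{equation}
with the same $\vz$ and $\vo$; subtracting the two expansions then gives $0 \le \Vy - \Vyl = O(\eps)$, so $(\Vy-\Vyl)/\sqrt\eps \to 0$, which is precisely asymptotic optimality up to order $\sqrt\eps$.

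To prove this companion expansion I would first use the homogeneity of $U(x)=x^\gamma/\gamma$ (with $0<\gamma<1$, consistent with $U(0+)=0$) to factor the wealth variable out of \eqref{eq_Vy}: writing $\Vy(t,x,y)=\frac{x^\gamma}{\gamma}u^\eps(t,y)$ collapses the fully nonlinear HJB equation into a nonlinear equation for $u^\eps(t,y)$ on $[0,T]\times\RR$. Following the distortion transformation of \cite{FoSiZa:13}, I would then set $u^\eps=(w^\eps)^{q}$ with $q=\frac{1-\gamma}{1-\gamma+\rho^2\gamma}$, which linearizes the problem: $w^\eps$ solves a linear parabolic PDE of the schematic form
\begin{equation}
w^\eps_t + \frac{1}{\eps}\MCL_0 w^\eps + \frac{1}{\sqrt\eps}c_1(y)\,w^\eps_y + c_0(y)\,w^\eps = 0, \quad w^\eps(T,y)=1,
\end{equation}
and therefore admits a Feynman--Kac representation. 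This linear structure is exactly the feature available only for power utility, and it is what makes the singular-perturbation expansion rigorous; this is why the corollary is confined to this class.

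On the linearized equation I would carry out the standard accuracy argument. Collecting powers of $\sqrt\eps$ produces the two-term approximation $w^\eps\approx w^{(0)}(t)+\sqrt\eps\,w^{(1)}(t,y)$, where the $O(1/\eps)$ and $O(1/\sqrt\eps)$ contributions are removed by invoking the solvability of the Poisson equation $\MCL_0\phi=g$ guaranteed by Assumption~\ref{assump_valuefunc}\eqref{assump_valuefuncZmoment}. Inserting this approximation into the PDE leaves a residual, which I would control uniformly in $\eps$ through the probabilistic representation together with the moment and ergodicity bounds of Assumption~\ref{assump_valuefunc}. Undoing the transformations $u^\eps=(w^\eps)^q$ and $\Vy=\frac{x^\gamma}{\gamma}u^\eps$ and matching orders identifies the zeroth-order term with the Merton value $M(t,x;\overline\lambda)=\vz$ of \eqref{eq_vz} and the $\sqrt\eps$-order term with the solution $\vo$ of \eqref{eq_vo}; this matching is automatic, since both expansions are generated by substituting the same ansatz into \eqref{eq_Vy} and collecting the same orders, which is what produced \eqref{eq_vz} and \eqref{eq_voofy} in the first place.

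The step I expect to be the main obstacle is the uniform-in-$\eps$ bound on the residual of the linearized problem: before cancellation the error carries negative powers of $\sqrt\eps$, and one must show that, after subtracting the constructed approximation, the remainder is genuinely $O(\eps)$ rather than merely $o(1)$. This hinges on a careful centering (the Poisson equation absorbs the $1/\eps$ and $1/\sqrt\eps$ singularities) and on integrability of $w^{(0)}, w^{(1)}$ and their $y$-derivatives along the paths of $Y^{(1)}$, which is precisely where the polynomial-growth and uniform moment hypotheses of Assumption~\ref{assump_valuefunc} enter. Once this estimate is in hand, the homogeneity reduction, the distortion, and the back-substitution are routine, and the corollary follows.
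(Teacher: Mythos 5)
Your proposal is correct and takes essentially the same route as the paper: the paper's two-line proof is exactly your comparison of the two first-order expansions, except that it imports the power-utility expansion $\Vy = \vz + \sqrt\eps\,\vo + \MCO(\eps)$ by citing \cite[Corollary~6.8]{FoSiZa:13} rather than re-deriving it. Your distortion-linearization sketch ($\Vy = \frac{x^\gamma}{\gamma}(w^\eps)^q$ with the standard distortion power) is precisely the mechanism behind that cited result, so you have merely unfolded the citation; the sandwich $\Vyl \le \Vy$ together with Theorem~\ref{Thm_one} is the same comparison argument the paper makes.
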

\begin{proof}
	This is obtained by comparing expansions of $\Vy$ given in \cite[Corollary 6.8]{FoSiZa:13}, and of $\Vyl$ from the above Theorem. Since both quantities have the approximation $\vz + \sqrt{\eps}\vo$ at order $\sqrt{\eps}$, we have the desired result.
\end{proof}

\subsection{Formal expansion of $\Vyl$}\label{sec_formalexpansion}
In the following derivation, to condense the notation, we use $R$ for $R(t,x;\aves)$, and $\pz$ for $\pz(t,x,y)$ given in \eqref{eq_pz}.

By the martingale property, $\Vyl$ solves the linear PDE: $\Vyl_t + \pz\mu(y)\Vyl_x+ \frac{\pz}{\sqrt\eps}\rho a(y)\sigma(y)\Vyl_{xy}
+ \frac{1}{\eps}\MCL_0\Vyl+ \frac{1}{2}\sigma^2(y)\left(\pz\right)^2\Vyl_{xx} = 0$.
Define two operators $\MCL_1$ and $\MCL_2$ by $\MCL_1 = \rho a(y)\sigma(y)\pz\partial_{xy} = \rho a(y)\lambda(y)R(t,x;\aves)\partial_{xy}$, and $\MCL_2 = \partial_t + \frac{1}{2}\sigma^2(y)\left(\pz\right)^2\partial_x^2 + \mu(y)\pz\partial_x = \partial_t +\frac{1}{2}\lambda^2(y)D_2 + \lambda^2(y)D_1$ respectively,
then this linear PDE can be rewritten as:
\begin{equation}\label{eq_Vylop}
\left(\MCL_2 +  \MCL_1/{\sqrt\eps} + \MCL_0/{\eps}\right)\Vyl = 0.
\end{equation}

We look for an expansion of $\Vyl$ of the form
$\Vyl = \vzl + \sqrt\eps \vol + \eps \vtl + \cdots,$
with
$\vzl(T,x,y) = U(x)$ and $v^{\pz, (k)}(T,x,y) = 0$, for $k \geq 1.$
Inserting the above expansion of $\Vyl$ into \eqref{eq_Vylop}, and collecting terms of $\MCO(\frac{1}{\eps})$ and $\MCO(\frac{1}{\sqrt\eps})$ give:
$\MCL_0 \vzl = 0,\quad \MCL_0 \vol + \MCL_1\vzl = 0.$
Since $\MCL_0$ and $\MCL_1$ are operators taking derivatives in $y$, we make the choice that $\vzl$ and $\vol$ are \emph{independent of $y$}.  Next, collecting terms of $\MCO(1)$ yields
$\MCL_0 \vtl + \MCL_2\vzl = 0,$
whose solvability condition requires that $\average{\MCL_2 \vzl} = 0$. This leads to a PDE for $\vzl$:
$\vzl_t + \frac{1}{2}\overline\lambda^2\left(\rz\right)^2\vzl_{xx} + \overline\lambda^2\rz\vzl_x = 0$, $\vzl(T,x) = U(x),$
which has a unique solution (c.f. Proposition~\ref{prop_ltxunique}). Since $\vz$ also solves this equation, we deduce that
$\vzl(t,x) \equiv \vz(t,x) = M(t,x;\overline\lambda)$,
 and $\vtl$ admits a solution
$
\vtl(t,x,y) = -\half\theta(y) D_1\vz + C_1(t,x),
$ with $\theta(y)$ given by $\MCL_0 \theta(y) = \lambda^2(y) - \overline{\lambda}^2$
and $D_k$ in \eqref{def_dkltx}.

Then, collecting terms of order $\sqrt\eps$  yields
$\MCL_2\vol + \MCL_1\vtl + \MCL_0 \vthl = 0,$
and the solvability condition reads $\average{\MCL_2 \vol + \MCL_1 \vtl} = 0$. This gives an equation satisfied by $\vol$:
$\vol_t + \frac{1}{2}\overline\lambda^2\left(\rz\right)^2\vol_{xx} + \overline\lambda^2\rz\vol_x - \half\rho B D_1^2 \vz  = 0,
\quad \vol(T,x) = 0$,
which is exactly equation \eqref{eq_voofy}. This equation is uniquely solved by $\vo$ (see \eqref{eq_vo}). Thus, we obtain
$\vol \equiv \vo = -\frac{1}{2}(T-t)\rho B D_1^2\vz$.

Using the solution of $\vol$ and $\vtl$ we just identified, one deduces an expression for $\vthl$:
$\vthl = \frac{1}{2}(T-t)\theta(y)\rho B\left(\frac{1}{2}D_2 + D_1\right)D_1^2\vz + \frac{1}{2} \rho \theta_1(y)D_1^2\vz + C_2(t,x),$
where $\theta_1(y)$ is the solution to the Poisson equation:
$\MCL_0\theta_1(y) = a(y)\lambda(y)\theta'(y) - \average{a\lambda\theta'}.$

%

\subsection{First order accuracy: proof of Theorem~\ref{Thm_one}}\label{sec_accproof}
This section completes the proof of Theorem~\ref{Thm_one}, which shows the residual function $E(t,x,y)$ is of order $\eps$. To this end, we define the auxiliary residual function $\widetilde E(t,x,y)$ by
$\widetilde E = \Vyl - (\vz + \eps^{1/2}\vo + \eps \vtl + \eps^{3/2}\vthl),$
where we choose $C_1(t,x) = C_2(t,x) \equiv 0$ in the expression of $\vtl$ and $\vthl$. Then, it remains to show $\widetilde E \sim \eps$.

According to the derivation in Section~\ref{sec_formalexpansion}, the auxiliary residual function $\widetilde E$ solves
$\left(\frac{1}{\eps}\MCL_0 + \frac{1}{\sqrt\eps}\MCL_1 + \MCL_2\right)\widetilde E + \eps(\MCL_1\vthl + \MCL_2\vtl) + \eps^{3/2}\MCL_2 \vthl = 0,$
with a terminal condition
$\widetilde E(T,x,y) = -\eps\vtl(T,x,y) - \eps^{3/2}\vthl(T,x,y).$
Note that $\frac{1}{\eps}\MCL_0 + \frac{1}{\sqrt\eps}\MCL_1 + \MCL_2$ is the infinitesimal generator of the processes $\left(X_t^\pz, Y_t\right)$, one applies Feynman-Kac formula and deduces:
{\small\begin{align}
\widetilde E(t,x,y) &= \eps \EE_{(t,x,y)}\left[\int_t^T \MCL_1\vthl (s,X_s^\pz, Y_s) \ud s \right] \nonumber \\
&+ \eps \EE_{(t,x,y)}\left[\int_t^T \MCL_2\vtl  (s,X_s^\pz, Y_s) \ud s \right] \nonumber\\
& + \eps^{3/2} \EE_{(t,x,y)}\left[\int_t^T \MCL_2\vthl  (s,X_s^\pz, Y_s) \ud s \right] \nonumber \\
& - \eps\EE_{(t,x,y)}\left[\vtl(T,X_T^\pz, Y_T) \right] \nonumber\\
&  - \eps^{3/2}\EE_{(t,x,y)}\left[\vthl(T,X_T^\pz,Y_T)\right].\label{eq_Eylprob}
\end{align}}
The first three expectations come from the source terms while the last two come from the terminal condition. We shall prove that each expectation above is uniformly bounded in $\eps$. The idea is to relate them to the leading order term $\vz$ and the risk-tolerance function $R(t,x;\lambda)$, where some nice properties and estimates are already established in Section~\ref{sec:assumptions}.

For the source terms, straightforward but tedious computations give:
{\tiny
\begin{align}
&\MCL_2\vtl = -\frac{1}{4}\theta(y)\left(\lambda^2(y) - \overline\lambda^2\right)D_1^2 \vz,\\
&\MCL_1\vthl = \frac{1}{2}\rho^2a(y)\lambda(y)\theta_1'(y)D_1^3\vz +\\
&\quad \frac{1}{2}(T-t)\rho^2Ba(y)\lambda(y)\theta'(y)D_1\left[\frac{1}{2}D_2+D_1\right]D_1^2\vz,\\
&\MCL_2\vthl = \frac{1}{4}\rho\theta_1(y)\left(\lambda^2(y)-\overline\lambda^2\right)D_1^3\vz \\
&\quad + \frac{1}{2}\theta(y)\rho B\left\{-\left[\frac{1}{2}D_2 + D_1\right]D_1^2\vz + \frac{1}{2}(T-t)\left(\lambda^2(y) - \overline\lambda^2\right)D_1^4\vz\right\}  \\
&\quad + \frac{1}{4}\theta(y)\rho B(T-t)\\
& \quad \quad\times \left[\frac{1}{2}\left(\lambda^2(y) - \overline\lambda^2\right)D_2D_1^3\vz - \lambda^2(y)RR_{xx}(D_2+D_1)D_1^2\vz\right],
\end{align}}
where in the computation of $\MCL_2\vthl$, we use the commutator between operators $D_2$ and $\MCL_2$:
$[\MCL_2, D_2]w = \MCL_2D_2 w - D_2 \MCL_2 w = -\lambda^2(y)R^2R_{xx}(Rw_{xx} + w_x).$
At terminal time $t = T$, they become $ \vtl(T,x,y) = -\half\theta(y)D_1\vz(T,x)$ and $\vthl(T,x,y) = \half\rho\theta_1(y)BD_1^2\vz(T,x)$.

Note that the quantity $RR_{xx}(t,x;\aves)$ is bounded by a constant $K$. This is proved for $ (t,x;\aves)\in [0,T)\times \RR^+ \times \RR$ in Proposition~\ref{prop_ssh}, and guaranteed by Assumption~\ref{assump_U}\eqref{assump_Urisktolerance} for $t = T$, since by definition $R(T,x;\aves) = R(x)$. Therefore, the expectations related to the source terms in \eqref{eq_Eylprob} are sum of terms of the following form:
\begin{equation}\label{eq_opD}
\EE_{(t,x,y)}\left[\int_t^T h(Y_s)\MCD\vz(s,X_s^\pz) \ud s\right],
\end{equation}
where $h(y)$ is at most polynomially growing, and $\MCD\vz$ is one of the following:
$D_1^2\vz$, $D_1^3\vz$, $D_1^4\vz$, $D_1D_2D_1^2\vz$, $D_2D_1^2\vz$, $D_2D_1^3\vz$.
Applying Cauchy-Schwartz inequality, it becomes
$\EE_{(t,y)}^{1/2}\left[\int_t^T h^2(Y_s) \ud s\right]\EE_{(t,x,y)}^{1/2}\left[\int_t^T \left(\MCD\vz(s,X_s^\pz) \right)^2\ud s\right].$
The first part is uniformly bounded in $\eps$ since $Y_t$ admits bounded moments at any order (cf. Assumption~\ref{assump_valuefunc}\eqref{assump_valuefuncZmoment}). It remains to show the second part is also uniformly bounded in $\eps$.  The proof consists a repeated use of the concavity of $\vz$ and the results in Proposition \ref{prop_ssh} and Lemma \ref{lem_unibdd}. For the sake of simplicity, we shall only detail the proof when $\MCD\vz = D_1^2\vz$ and omit the rest. Since $\abs{D_1^2\vz} = \abs{RR_x\vz_x - R\vz_x} \leq (K_0+1)R\vz_x \leq (K_0+1) K_0x\vz_x \leq K_0(K_0+1)\vz$, we conclude
{\small\begin{multline*}
\EE_{(t,x,y)}\left[\int_t^T \left(D_1^2\vz(s,X_s^\pz, Y_s)\right)^2\ud s\right] \\ \leq K_0^2(K_0+1)^2 \EE_{(t,x,y)}\left[\int_t^T \left(\vz(s,X_s^\pz)\right)^2\ud s\right]
\end{multline*}}is uniformly bounded in $\eps$ by Lemma \ref{lem_unibdd}. Straightforward but tedious computations show that the rest terms in \eqref{eq_opD} are also bounded by multiples of $R\vz$, then the boundedness is again ensured by the relation $R(t,x;\aves) \leq K_0 x$, the concavity of $\vz$, and Lemma~\ref{lem_unibdd}.

The last two expectations in \eqref{eq_Eylprob} are treated similarly by using Assumption \ref{assump_U} \eqref{assump_Uiii} and the concavity of $U(x)$. Therefore we have shown that $\abs{\widetilde E(t,x,y)} \leq \widetilde C\eps$. By the inequality
$\abs{E(t,x,y)} \leq \widetilde C\eps + \eps \vtl(t,x,y) + \eps^{3/2}\vthl(t,x,y) \leq C\eps,$
we obtain the desired result.


\section{The Asymptotic Optimality of $\pz$}\label{sec_asymptoticoptimality}
We now show that the strategy $\pz$ defined in \eqref{eq_pz} asymptotically outperforms every family $\MCA_0^\eps\left[\pzt, \pot,\alpha\right]$ as precisely stated in our main Theorem \ref{Thm_optimality} in Section~\ref{sec_intro}.

For a fixed choice of $(\pzt$, $\pot)$ and positive $\alpha$, recall the definition of
$\MCA_0^\eps\left[\pzt, \pot,\alpha\right]$ in \eqref{def_class}.
Working with $\MCA_0^\eps$ is motivated by the following. The optimal control to problem \eqref{def_Vy}, whose existence is ensured by \cite{KrSc:03},
clearly depends on $\eps$. It is not known whether $\pi^\ast$ will converge as $\eps$ goes to zero. But if $\eps$ had a limit, say $\pzt$, it is then natural to consider a family of controls of the form $\pzt + \eps^\alpha\pot$
as the perturbation of the limit $\pzt$. We think the subset $\MCA_0^\eps$ is not so small comparing to the full one $\MCA^\eps$, as we only restrict $\alpha>0$, which allows for correction of any order in $\eps$.

\begin{assump}\label{assump_piregularity}
	For the triplet $(\pzt$, $\pot$, $\alpha)$, we require:
	\begin{enumerate}[(i)]
		\item The whole family of strategies $\{\pzt + \eps^\alpha \pot\}_{ \eps \leq 1} \in \MCA^\eps$ ;
		\item Let  $(\widetilde{X}_s^{t,x})_{t\leq s\leq T}$ be the solution to:
		{\small$\ud \widetilde X_s = \average{\mu(\cdot)\pzt(s,\widetilde X_s, \cdot)} \ud s + \sqrt{\average{\sigma^2(\cdot) \pzt(s,\widetilde X_s,\cdot)^2}} \ud W_s,$}
		starting at $x$ at time $t$. By (i), $\widetilde{X}_s^{t,x}\geq 0$. We further
		assume that it has full support $\RR^+$ for any $t<s\leq T$.
	\end{enumerate}
\end{assump}
\begin{rem}
	Part (ii) is motivated as follows. Consider
	$\ud \widehat X_s = \langle\mu\pz\rangle \ud s + \langle\sigma^2\pz^2\rangle^{\half} \ud W_s.$
Noticing that  $ \average{\mu(\cdot)\pz(t,x,\cdot)} = \overline{\lambda}^2 R(t,x;\aves)$,  $\sqrt{\average{\sigma^2(\cdot)\pz(t,x,\cdot)^2}} = \aves R(t,x;\aves)$, then $\widehat X_s$ can be interpreted as the optimal wealth process of the classical Merton problem with averaged Sharpe-ratio $\aves$.
	From \cite[Proposition 7]{KaZa:14}, one has	
	$\widehat X_s^{t,x} = H\left(H^{-1}(x,t,\aves) + \aves^2(s-t) + \aves(W_s-W_t), s, \aves\right),$
	where $H: \RR\times[0,T]\times \RR \to \RR^+$ solves the heat equation $H_t + \half \aves^2 H_{xx} = 0$,
	and is of full range in $x$. Consequently,  $\widehat X_s^{t,x}$ has full support $\RR^+$, and thus, it is natural to require that $\widetilde X_s^{t,x}$ has full support $\RR^+$.
\end{rem}

Denote by $\Vyt$ the value function associated to the trading strategy $\pi := \pzt + \eps^\alpha \pot \in \MCA_0^\eps\left[ \pzt, \pot, \alpha\right]$:
\begin{equation}\label{def_Vyt}
\Vyt(t,x,y) = \EE\left[U(X_T^\pi) \vert X_t^\pi = x, Y_t = y \right],
\end{equation}
where $X_t^\pi$ is the wealth process following the strategy $\pi\in \MCA_0^\eps$, and $Y_t$ is fast mean-reverting  with the same $\eps$.
The idea is to compare $\Vyt$ with $\Vyl$ defined in \eqref{def_Vyl}, for which a rigorous first order approximation $\vz+\sqrt{\eps}\vo$ has been established in Theorem \ref{Thm_one}. After finding the expansion of $\Vyt$, the comparison is done asymptotically in $\eps$ up to $\mathcal{O}(\sqrt{\eps})$. 



{\it Approximations of the Value Function $\Vyt$.} Denote by $\MCL$ the infinitesimal generator of the state processes $(X_t^\pi,Y_t)$:
$\MCL :=  \frac{1}{\eps}\MCL_0 + \frac{1}{2}\sigma^2(y)\left(\pzt + \eps^\alpha \pot \right)^2\partial_{xx} + \left(\pzt + \eps^\alpha\pot\right)\mu(y)\partial_x + \frac{1}{\sqrt{\eps}}\rho a(y)\sigma(y)\left(\pzt + \eps^\alpha\pot\right)\partial_{xy},$
then by the martingale property,  the value function $\Vyt$ defined in \eqref{def_Vyt} satisfies
\begin{equation}\label{eq_Vyt}
\partial_t \Vyt + \MCL \Vyt =0, \qquad \Vyt(T,x,y) = U(x).
\end{equation}
Motivate by the fact that the first order in the operator $\MCL$ is $\eps^\alpha$, we propose the following expansion form for $\Vyt$
$\Vyt = \vzt + \eps^\alpha \vat + \eps^{2\alpha}\vtat + \cdots + \eps^{n\alpha}\widetilde v^{n\alpha} + \sqrt{\eps}\,\vot +\cdots,$
where $n$ is the largest integer such that $n\alpha < 1/2$, and for the case $\alpha > 1/2$, $n$ is simply zero. In the derivation, we aim at identifying the zeroth order term $\vzt$ and the first non-zero term up to $\mathcal{O}(\sqrt\eps)$. Apparently, the term following $\vzt$ will depend on the value of $\alpha$.

To further simplify the notation, we decompose $\partial_t + \MCL$ according to different powers of $\eps$ as follows:
$\partial_t + \MCL = \frac{1}{\eps}\MCL_0 + \frac{1}{\sqrt{\eps}}\widetilde \MCL_1 + \widetilde \MCL_2 + \eps^\alpha \widetilde \MCL_3 + \eps^{2\alpha}\widetilde \MCL_4 + \eps^{\alpha-1/2}\widetilde \MCL_5,$
where the operators $\widetilde \MCL_i$ are defined by: $\widetilde \MCL_1 = \pzt\rho_1 a(y)\sigma(y)\partial_{xy}$, $\widetilde \MCL_2 =\partial_t +  \half\sigma^2(y)\left(\pzt\right)^2\partial_{xx} + \pzt\mu(y)\partial_x$, $\widetilde \MCL_3 = \sigma^2(y)\pzt\pot\partial_{xx} + \pot\mu(y)\partial_x$, $\widetilde \MCL_4 = \half\sigma^2(y)\left(\pot\right)^2\partial_{xx}$ and $\widetilde \MCL_5 = \pot\rho_1a(y)\sigma(y)\partial_{xy}$.

In all cases, we first collect terms of $\mathcal{O}(\eps^{\beta})$ in \eqref{eq_Vyt} with $ \beta \in [-1,0)$. Noticing that $\MCL_0$ and $\widetilde \MCL_1$ (also $\widetilde \MCL_5$ when $\alpha < 1/2$) take derivatives in $y$, we are able to make the choice that the approximation of $\Vyt$ up to $\mathcal{O}(\eps^{\beta'})$ is independent of $y$, for $\beta' < 1$. In the following derivation, this choice is made for every case, and consequently, we will not mention this again  and will start the argument by collecting terms of $\mathcal{O}(1)$. Different order of approximations are obtained depending  on $\pzt$ being identical to $\pz$ or not.

\subsubsection{Case $\pzt \equiv \pz$}
We first analyze the  case $\pzt \equiv \pz$, in which $\widetilde \MCL_1$ and $\widetilde \MCL_2$ coincide with $\MCL_1$ and $\MCL_2$, and $\widetilde \MCL_3 \vz = 0$. The terms of $\mathcal{O}(1)$ form a Poisson equation for $\vtt$
\begin{align}
\MCL_0 \vtt + \MCL_2 \vzt  = 0, \quad \vzt(T,x) = U(x).
\end{align}
For different values of $\alpha$, there might be extra terms which are eventually zero, thus are not included in the above equation: $\MCL_1 \vot$ (all cases), $\widetilde \MCL_5 \vzt $ when $\alpha = 1/2$, and $\widetilde \MCL_5 \widetilde v^{k\alpha}$ when $(k+1)\alpha = 1/2$. By the solvability condition, $\vzt$ solves \eqref{def_ltxpde}, which possesses a unique solution $\vz$. Therefore, we deduce
$\vzt \equiv \vz, \text{ and }\quad  \vtt \equiv \vtl.
$

\begin{enumerate}[(i)]
\item $\alpha = 1/2$.
We then collect terms of $\mathcal{O}(\eps^{1/2})$:
\begin{align}
 \MCL_0\vtht + \MCL_2\vot + \MCL_1\vtt + \widetilde \MCL_3 \vzt + \widetilde \MCL_5 \vot= 0.
\end{align}
This is a Poisson equation for $\vtht$, for which the solvability condition gives: $\vot$ satisfies \eqref{eq_vo}. Here we have used $\widetilde \MCL_3 \vzt = \widetilde \MCL_3 \vz = 0$, $\vtt = \vtl$ and $\widetilde \MCL_5\vot = 0$.
This equation is uniquely solved, one deduces
$
\vot = \vo,  \text{ and }\quad  \vtht \equiv \vthl.
$

\item $\alpha > 1/2$. Collecting terms of $\mathcal{O}(\sqrt{\eps})$ yields a Poisson equation for $\vtht$,
$
\MCL_0 \vtht + \MCL_2 \vot + \MCL_1 \vtt + \widetilde \MCL_5 \vzt = 0,
$
where the term $ \widetilde \MCL_5 \vzt$ only exists when $\alpha = 1$ (but anyway $\MCL_1\vot$ and $\widetilde \MCL_5\vzt$ disappear due to their independence of y). Arguments similar to the case $\alpha = 1/2$ give that
$
 \vot = \vo,  \text{and} \quad \vtht = \vthl.
$

\item $\alpha < 1/2$. The next order is $\eps^\alpha$,
\begin{equation}
\MCL_0\widetilde v^{\alpha+1} + \MCL_2\vat + \widetilde \MCL_3 \vzt + \MCL_1 \widetilde v^{\alpha + 1/2} + \widetilde \MCL_5 \vot =0.
\end{equation}
Again the last three terms disappear due to the fact $\widetilde \MCL_3 \vzt = \widetilde \MCL_3 \vz = 0$, and $\widetilde v^{\alpha + 1/2}$ and $\vot$'s independence of $y$. Then using solvability condition, $\vat$ solves
$
\Ltx(\aves) \vat(t,x;\aves) = 0, \quad \vat(T,x) = 0,
$
which only has the trivial solution $\vat \equiv 0$. Consequently, we need to identify the next non-zero term.

	$\pmb{1/4 < \alpha < 1/2}$. The next order is $\sqrt{\eps}$, which gives $
	\MCL_0 \vtht + \MCL_2 \vot + \MCL_1 \vtt =0. 
	$
	It coincides with \eqref{eq_vo} after using the solvability condition, and we deduce $\vot \equiv \vo$ and $\vtht \equiv \vthl$.
	
	$\pmb{\alpha = 1/4}$. The next order is $\sqrt\eps$, and the Poisson equation for $\vtht$ becomes
	$	\MCL_0 \vtht + \MCL_2 \vot + \MCL_1 \vtt +\widetilde \MCL_4 \vzt =0.
$
	The solvability condition reads $\Ltx(\aves)\vot - \half \rho BD_1^2\vz - \half \aves^2D_1\vz = 0$. Comparing this equation with \eqref{eq_vo} and using the concavity of $\vz$, one deduces $\vot \leq \vo$.
	
    $\pmb{\alpha < 1/4}$. The next order is $\eps^{2\alpha}$ since $2\alpha < 1/2$, and
    $\MCL_0 \widetilde v^{2\alpha+1} + \MCL_2\vtat + \MCL_1 \widetilde v^{2\alpha + 1/2} + \widetilde \MCL_3 \vat + \widetilde \MCL_4 \vzt + \widetilde \MCL_5 \widetilde v^{\alpha + 1/2} = 0, \quad \vtat(T,x) = 0.$
    The third, fourth and sixth terms varnish since $\vat \equiv 0$, and $\widetilde v^{2\alpha + 1/2}$ and $\widetilde v^{\alpha + 1/2}$ are independent of $y$. One has
    $\vtat_t + \half\aves^2 R^2\vtat_{xx}  + \aves R \vtat_x + \frac{1}{2}\average{\sigma^2(\cdot)\left(\pot(t,x,\cdot)\right)^2}\vz_{xx} = 0$,
    by the solvablility condition.
    Assuming that $\pot$ is not identically zero, we claim $\vtat <0$.


\end{enumerate}

%
\subsubsection{Case $\pzt \not \equiv \pz$}
In this case, after collecting terms of $\mathcal{O}(1)$, and using the solvability condition, one has the following PDE for $\vzt$:
		$\vzt_t + \frac{1}{2}\average{\sigma^2(\cdot)\pzt(t,x,\cdot)^2}\vzt_{xx} + \average{\pzt(t,x,\cdot)\mu(\cdot)}\vzt_x = 0.$
To compare $\vzt$ to $\vz$, we rewrite \eqref{eq_vz} in the same pattern:
		$\vz_t +\frac{1}{2}\average{\sigma^2(\cdot)\pzt(t,x,\cdot)^2}\vz_{xx} + \average{\pzt(t,x,\cdot)\mu(\cdot)}\vz_x - \frac{1}{2}\average{\sigma^2(\cdot)\left(\pzt -\pz \right)^2(t,x,\cdot)}\vz_{xx}  = 0$,
		via the relation $
		-\average{\sigma^2(y)(\pzt - \pz)\pz}\vz_{xx} = \average{(\pzt - \pz)\mu(y)}\vz_x.$
		Again by the strict concavity of $\vz$ and Feynman--Kac formula, we obtain $\vzt < \vz$.
		
%

To fully justify the above expansions, additional assumptions similar to \cite[Appendix C]{FoHu:16} are needed. They are technical uniform (in $\eps$) integrability conditions on the strategies $\MCA_0^\eps[\pzt, \pot, \alpha]$. For the sake of simplicity, we omit the conditions here and refer to \cite[Appendix C]{FoHu:16} for further details. Now we summarize the above derivation as follows.
\begin{prop}\label{prop_piaccuracy}
	Summary of the accuracy results:
	{\small	\begin{table}[H]
			\centering
			\caption{Accuray of approximations of $\Vyt$.}\label{tab_accuracy}
			\begin{tabular}{c|c|c|c}
				\hline\hline
				Case&Value of $\alpha$&Approximation&Accuracy\\ \hline
				&$\alpha \geq 1/2$ & $\vz + \sqrt{\eps}\vo$& $\MCO(\eps)$ \\ \cline{2-2}\cline{4-4}
				$\pzt \equiv \pz$&$1/4 < \alpha < 1/2$& & $\MCO(\eps^{2\alpha})$   \\ \cline{2-4}
				&$\alpha = 1/4$& $\vz + \sqrt{\eps}\vot$& $\MCO(\eps^{3/4})$ \\\cline{2-4}
				&$\alpha < 1/4$& $\vz + \eps^{2\alpha}\vtat$&$\MCO(\eps^{3\alpha \wedge (1/2)})$ \\  \hline
				
					$\pzt \not\equiv \pz$& all& $\vzt$&$ \MCO(\eps^{\alpha \wedge (1/2)})$ \\ \hline\hline
			\end{tabular}
		\end{table}
	}
	\noindent where the accuracy column gives the order of the difference between $\Vyt$ and its approximation. Moreover, when $\pzt \equiv \pz$, we have the relation $\vot \leq \vo$ if $\alpha = 1/4$, and $\vtat <0$ if $\alpha < 1/4$; while if $\pzt \not \equiv \pz$, then $\vzt < \vz$.
\end{prop}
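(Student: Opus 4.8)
The plan is to establish the accuracy column of Table~\ref{tab_accuracy} by the same residual-plus-Feynman--Kac scheme used for Theorem~\ref{Thm_one}, handling each regime of $\alpha$ separately; the approximations themselves and the identification of the expansion terms have already been carried out in the formal derivation above, so only the error orders and the three sign relations remain to be justified. In a given case I would fix the stated approximation $A^\eps(t,x,y)$ from the table, adjoin the next few expansion terms (with their integration constants set to zero, exactly as in Section~\ref{sec_accproof}), and form the auxiliary residual $\widetilde E := \Vyt - \bigl(A^\eps + \text{correctors}\bigr)$. By construction of the Poisson equations collected above, $\widetilde E$ solves a linear equation $(\partial_t + \MCL)\widetilde E = -S^\eps$ with terminal datum $\widetilde E(T,\cdot) = G^\eps$, where both the source $S^\eps$ and the terminal datum $G^\eps$ carry an explicit prefactor of the order claimed in the table ($\eps$, $\eps^{2\alpha}$, $\eps^{3\alpha}$ or $\eps^\alpha$, as appropriate) and are finite sums of the operators $\MCL_1,\MCL_2,\widetilde\MCL_3,\widetilde\MCL_4,\widetilde\MCL_5$ applied to already-identified expansion terms.

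Since $\partial_t + \MCL$ is exactly the generator of the controlled pair $(X_t^\pi,Y_t)$, I would apply Feynman--Kac to write $\widetilde E(t,x,y)$ as the prefactor times a sum $\EE_{(t,x,y)}[\int_t^T S^\eps\,\ud s] + \EE_{(t,x,y)}[G^\eps(X_T^\pi,Y_T)]$, in complete parallel with \eqref{eq_Eylprob}. After the routine commutator computations with the $D_k$, each source contribution reduces to finitely many terms of the form \eqref{eq_opD}, namely $\EE_{(t,x,y)}[\int_t^T h(Y_s)\,\MCD\vz(s,X_s^\pi)\,\ud s]$ with $h$ at most polynomially growing and $\MCD\vz\in\{D_1^2\vz,D_1^3\vz,\ldots\}$. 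Cauchy--Schwarz splits these into a $Y$-factor bounded by the moment estimate of Assumption~\ref{assump_valuefunc}\eqref{assump_valuefuncZmoment} and a wealth factor $\EE_{(t,x,y)}^{1/2}[\int_t^T (\MCD\vz)^2\,\ud s]$; bounding $\abs{\MCD\vz}$ by a multiple of $\vz$ through Proposition~\ref{prop_ssh} (using $R(t,x;\aves)\le K_0 x$ and $x\vz_x\le\vz$) reduces the wealth factor to the uniform $L^2$-bound on $\vz(\cdot,X_\cdot^\pi)$, the analog of Lemma~\ref{lem_unibdd} for the controlled process $X^\pi$ (see the discussion below). The terminal contributions are handled as in Section~\ref{sec_accproof} via Assumption~\ref{assump_U}\eqref{assump_Uiii} and the concavity of $U$, giving $\abs{\Vyt - A^\eps}\le C\eps^{p}$ with $p$ the order listed.

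For the sign relations I would compare at the level of the linear PDEs already extracted. When $\alpha=1/4$, the solvability condition gives $\Ltx(\aves)\vot = \half\rho B D_1^2\vz + \half\aves^2 D_1\vz$, which differs from \eqref{eq_vo} by the extra source $\half\aves^2 D_1\vz = \half\aves^2 R\vz_x$, nonnegative since $\vz$ is strictly concave and increasing; as $\Ltx(\aves)$ generates the Merton wealth diffusion and Feynman--Kac subtracts the integrated source, this lowers the value, whence $\vot\le\vo$. When $\alpha<1/4$ the equation $\Ltx(\aves)\vtat = -\half\average{\sigma^2(\cdot)\pot(t,x,\cdot)^2}\vz_{xx}$ has a right-hand side that is nonnegative (strictly positive once $\pot\not\equiv 0$, because $\vz_{xx}<0$), so Feynman--Kac yields $\vtat<0$. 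Finally, when $\pzt\not\equiv\pz$, the rewritten equation for $\vz$ carries the extra term $-\half\average{\sigma^2(\cdot)(\pzt-\pz)^2(t,x,\cdot)}\vz_{xx}$, strictly positive wherever $\pzt\neq\pz$; the full-support property of $\widetilde X$ from Assumption~\ref{assump_piregularity}(ii), together with Feynman--Kac, then forces $\vzt<\vz$.

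The principal obstacle is not the order bookkeeping across the five regimes, which is mechanical, but securing the wealth factor uniformly in $\eps$ for the $\eps$-dependent control $\pi=\pzt+\eps^\alpha\pot$. Lemma~\ref{lem_unibdd} and Assumption~\ref{assump_valuefunc}\eqref{assump_valuefuncX} deliver precisely such a bound for $X^\pz$, but here the strategy itself varies with $\eps$, so one must ensure that $\EE_{(t,x,y)}[\int_t^T (\vz(s,X_s^\pi))^2\,\ud s]$ stays bounded uniformly over the whole family $\{\pzt+\eps^\alpha\pot\}_{\eps\le 1}$, and that the polynomially growing $Y$-coefficients do not destroy this uniformity. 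This is exactly the technical content of the uniform integrability conditions deferred to \cite[Appendix C]{FoHu:16}; checking them against the admissibility and growth of $(\pzt,\pot)$ is where the genuine work lies, and is the reason those conditions are imposed rather than derived here.
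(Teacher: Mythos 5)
Your proposal is correct and takes essentially the same route as the paper: the table entries rest on the formal expansion and solvability conditions derived in Section~\ref{sec_asymptoticoptimality}, the error orders are justified by the same residual-plus-Feynman--Kac scheme as in Section~\ref{sec_accproof}, and the sign relations $\vot \leq \vo$, $\vtat < 0$, $\vzt < \vz$ follow, exactly as in the paper, from the extra nonnegative source terms combined with concavity and monotonicity of $\vz$ and the full-support property in Assumption~\ref{assump_piregularity}. You also correctly identify that the uniform-in-$\eps$ integrability for the $\eps$-dependent family $\{\pzt + \eps^\alpha\pot\}$ is the one genuinely nontrivial ingredient, which is precisely what the paper defers to the technical conditions of \cite[Appendix~C]{FoHu:16} rather than proving here.
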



{\it Asymptotic Optimality: Proof of Theorem~\ref{Thm_optimality}.} We now give the proof of Theorem~\ref{Thm_optimality}, via comparing the first order approximation $\vz + \sqrt\eps{\vo}$ of $\Vyl$ obtained in Theorem~\ref{Thm_one}, and the one of $\Vyt$ summarized in Tab.~\ref{tab_accuracy}.

In the case that the approximation of $\Vyt$ is $\vz + \sqrt{\eps}\vo$, the limit is easily verified to be zero. When the approximation of $\Vyt$ is $\vz + \sqrt{\eps}\vot$, the limit $\ell$ is non-positive but stay finite, by the fact $\vot \leq \vo$. If $\pzt \equiv \pz $ and $\alpha < 1/4$, 
the limit $\ell$ is computed as
$
\ell = \lim_{\eps \to 0}\bigl(\eps^{2\alpha}\vtat - \sqrt{\eps}\vo + \MCO(\eps^{3\alpha \wedge 1/2})\bigr)/{\sqrt{\eps}} = -\infty,
$
since $\vtat<0$. The similar arguments also apply to the case $\pzt \not \equiv \pz$, and lead to $\ell = -\infty$. Thus we complete the proof. In fact, this limit can be understood according to the following four cases:
	\begin{enumerate}[(i)]
		\item$\pzt \equiv \pz$ and $\ell = 0$: $\Vyt = \Vyl + o(\sqrt{\eps})$;
		\item $\pzt \equiv \pz$ and $-\infty < \ell <0$: $\Vyt = \Vyl + \MCO(\sqrt{\eps})$ with $\MCO(\sqrt{\eps}) <0$;
		\item $\pzt \equiv \pz$ and $\ell = -\infty$: $\Vyt = \Vyl + \MCO(\eps^{2\alpha})$ with $\MCO(\eps^{2\alpha})<0$ and $2\alpha < 1/2$;
		\item $\pzt \not \equiv \pz$: $\displaystyle
		\lim_{\eps \to 0} \Vyt(t,x,z)< \lim_{\eps \to 0}  \Vyl(t,x,z).$
	\end{enumerate}

\bibliographystyle{plain}
\bibliography{Reference}

\end{document}